%% file: main.tex
\documentclass[runningheads]{llncs}
\usepackage[utf8]{inputenc}
\usepackage{amsmath}
\usepackage{algorithm}
\usepackage[noend]{algpseudocode}
\usepackage{mathrsfs}
\usepackage{mathtools}
\usepackage{comment}
\usepackage{xspace}
\usepackage{bbm}
\usepackage{booktabs}
\usepackage{array}
\usepackage{multirow}
\usepackage{multicol}
\usepackage{xcolor}
\usepackage{graphicx}
\usepackage[firstpage]{draftwatermark}
\usepackage{hyperref}
\usepackage{float}
\usepackage[misc,geometry]{ifsym}

\usepackage{stmaryrd}
\usepackage{amssymb}

\usepackage[colorinlistoftodos]{todonotes}

\input{macros}

\SetWatermarkAngle{0}

\SetWatermarkText{} %

\ifextended
\else
\input{appendix-labels}
\fi
\begin{document}
\ifextended
\title{Synthesizing Trajectory Queries from Examples}
\else
\title{Synthesizing Trajectory Queries from Examples\thanks{Appendices are available in the technical report~\cite{quivrextended}.}}
\fi
\author{Stephen Mell\inst{1}\textsuperscript{(\Letter)}
\and
Favyen Bastani\inst{2}
\and
Steve Zdancewic\inst{1}
\and
Osbert Bastani\inst{1}
}

\authorrunning{S. Mell et al.}
\institute{University of Pennsylvania, Philadelphia PA 19104, USA \\
\email{\{sm1,stevez,obastani\}@cis.upenn.edu} \and
Allen Institute for AI, Seattle WA 98104, USA \\
\email{favyenb@allenai.org}}

\maketitle              %
\begin{abstract}
\input{abstract}

\end{abstract}

\setcounter{footnote}{0}
\input{body}

\bibliographystyle{splncs04}
\bibliography{main}

\ifextended
\clearpage
\appendix

\input{appendix}

\fi

\end{document}

%% file: macros.tex
\newcommand{\basepred}[1]{\langle #1 \rangle}

\newcommand{\seq}{\,;\,}
\newcommand{\den}[1]{\llbracket #1 \rrbracket}

\newcommand{\denq}[1]{\llbracket #1 \rrbracket^{q}}
\newcommand{\denmat}[1]{\llbracket #1 \rrbracket^{M}}

\newcommand{\boxlit}[2]{\left\lfloor #1, #2 \right\rceil}

\newcommand{\abs}[1]{\left| #1 \right|}
\newcommand{\toolname}{\textsc{Quivr}\xspace}

\newcommand{\brackparen}[1]{\mathopen{}\left( {#1}_{{}_{}}\,\negthickspace\right)\mathclose{}}
\newcommand{\brackangle}[1]{\mathopen{}\left\langle {#1}_{{}_{}}\,\negthickspace\right\rangle\mathclose{}}
\newcommand{\predbase}[1]{\brackangle{#1}}

\DeclarePairedDelimiter{\transstl}{\llparenthesis}{\rrparenthesis}

\newcommand{\para}[1]{\vspace{5pt}\noindent\emph{#1}}

\newcommand{\expredname}{\text{VelGt}}
\newcommand{\expred}[1]{\predbase{\expredname_{#1}}}

\newif\ifextended\extendedtrue

\ifextended
\newcommand{\refappendix}[1]{Appendix~\ref{#1}}

\newcommand{\refappendixtab}[1]{Table~\ref{#1}}
\else
\newcommand{\refappendix}[1]{Appendix~\ref{#1}}

\newcommand{\refappendixtab}[1]{Appendix~Table~\ref{#1}}
\fi

\makeatletter
\newcommand\notsotiny{\@setfontsize\notsotiny\@vipt\@viipt}
\makeatother

%% file: abstract.tex
Data scientists often need to write programs to process predictions of machine learning models, such as object detections and trajectories in video data. However, writing such queries can be challenging due to the fuzzy nature of real-world data; in particular, they often include real-valued parameters that must be tuned by hand. We propose a novel framework called \toolname that synthesizes trajectory queries matching a given set of examples. To efficiently synthesize parameters, we introduce a novel technique for pruning the parameter space and a novel quantitative semantics that makes this more efficient.
We evaluate \toolname on a benchmark of 17 tasks, including several from prior work, and show both that it can synthesize accurate queries for each task and that our optimizations substantially reduce synthesis time.

%% file: body.tex
\section{Introduction}

Over the past decade, deep neural networks (DNNs) have successfully solved challenging artificial intelligence problems~\cite{krizhevsky2017imagenet,silver2016mastering}. Abstractly, these models can be thought of as providing interfaces to real-world data---e.g., they can provide object classes~\cite{krizhevsky2017imagenet,he2016deep}, detections~\cite{ren2015faster,redmon2018yolov3}, and trajectories~\cite{bertinetto2016fully,wojke2017simple,bergmann2019tracking}. Then, these predictions are processed by programs, e.g., to identify driving patterns~\cite{bastaniskyquery}, events in TV broadcasts~\cite{fu2019rekall}, or animal behaviors~\cite{shah2020learning}.

However, writing such programs can be challenging since they must still account for the fuzziness of real data. To do so, these programs typically include real-valued parameters that need to be manually tuned by the user. For example, consider a query over car trajectories designed to identify instances where one car turns in front of another. This query must capture the shape of the trajectory of both the turning car and the car crossing the intersection. In addition, the user must select the appropriate maximum duration from the first car changing lanes to the second car crossing the intersection. Even an expert would require significant experimentation to determine good parameter values; in our experience, it can take up to an hour to tune the parameters for a single query.

We focus on programs that query databases of trajectories output by an object tracker~\cite{kang2019blazeit,fu2019rekall,kang2020jointly,kang2020model,bastani2020miris,bastani2020vaas,moll2020exsample,bastaniskyquery}. Given a video, the tracker predicts the positions of objects in each frame (e.g., cars, people, or mice), as well as associations between detections of the same object across successive frames. Applications often require subsequent analysis of these trajectories. For example, in autonomous driving, when a risky scenario is encountered, engineers typically search for additional examples of that driving pattern to improve their planner~\cite{sadigh2016information,sadigh2016planning,schmerling2018multimodal}---e.g., cars driving too close~\cite{wishart2020driving} or stopping in the middle of the road~\cite{bastani2021skyquery}. Object tracking has also been used to track robots~\cite{preiss2017crazyswarm,weinstein2018visual}, animals for behavioral analysis~\cite{tweed2002tracking,betke2007tracking,shah2020learning}, and basketball players for sports analytics~\cite{zhan2020learning,shah2020learning}.

We propose an algorithm for synthesizing queries over object trajectories given just a handful of input-output examples. A query takes as input a representation of a trajectory as a sequence of states (e.g., position, velocity, and acceleration) in successive frames of the video, and outputs whether the trajectory matches its semantics. Our query language is based on regular expressions---in particular, a query is a composition of a user-extensible set of predicates using the sequencing, conjunction, and iteration operators. For instance, trajectories might correspond to cars in a video; Figure~\ref{fig:example} shows a query for identifying cars turning at an intersection. As we discuss in Section~\ref{sec:related}, the full query language semantics is rich enough to subsume (variants of) Kleene algebras with tests (KAT)~\cite{kozen1997kleene} and signal temporal logic (STL)~\cite{stl}; however, such generality is seldom needed, so we use a pared-down query language that works well in practice.

Our algorithm performs enumerative search over the space of possible queries to identify ones that are consistent with the given examples.
A key challenge in our setting is that our predicates have real-valued parameters that must also be synthesized. Thus, our strategy enumerates \emph{sketches}, which are partial programs that only contain holes corresponding to real-valued parameters. For each sketch, we search over the space of real-valued parameters, while using an efficient pruning strategy to reduce the search space. At a high level, we use a quantitative semantics to directly compute ``boundary parameters'' at which a given example switches from being labeled positive to negative. Then, depending on the target label, we can prune the entire region of the search space on one side of these boundary parameters. We prove that this synthesis strategy comes with soundness and (partial) completeness guarantees.

We implement our approach in a system called \toolname.\footnote{\toolname stands for QUery Induction for Video tRajectories.} Our implementation focuses on videos from fixed-position cameras.
While our language and synthesis algorithm are general,
the predicates we design are tailored to specific settings. We evaluate \toolname on identifying driving patterns in traffic videos, including ones inspired by recent work on autonomous driving~\cite{sadigh2016planning,sadigh2016information,schmerling2018multimodal}, on behavior detection in a dataset of mouse trajectories~\cite{mabe22}, and on a synthetic task from the temporal logic synthesis literature~\cite{kong2017}. We demonstrate how both our parameter pruning strategies and our query evaluation optimizations lead to substantial reductions in the running time of our synthesizer.

In summary, our contributions are:
\begin{itemize}
\item A language for querying object trajectories (Section~\ref{sec:lang}) and an algorithm for synthesizing such queries from examples (Section~\ref{sec:algo}).
\item An efficient parameter pruning approach based on a novel quantitative semantics (Section~\ref{sec:algo}), yielding a $5.0 \times$ speedup over the state-of-the-art quantitative pruning technique from the temporal logic synthesis literature.
\item An implementation of our approach in \toolname, and an evaluation of \toolname on identifying driving behaviors in traffic camera video and mouse behaviors in a dataset of mouse trajectories (Section~\ref{sec:exp}), demonstrating substantially better accuracy than neural network baselines.
\end{itemize}

\begin{figure*}[t]
\centering
\begin{tabular}{cc}
\includegraphics[width=0.45\textwidth]{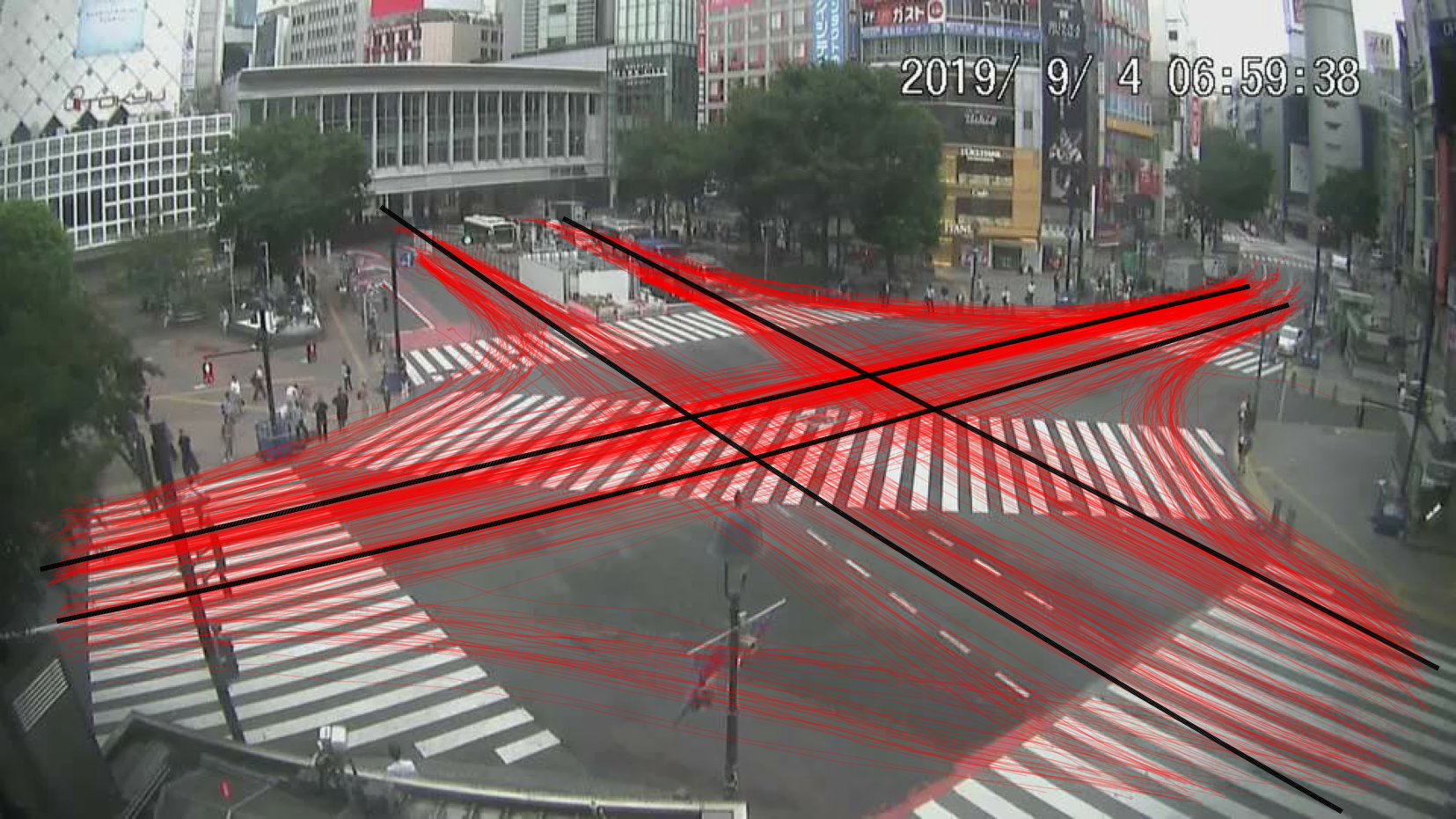} &
\includegraphics[width=0.45\textwidth]{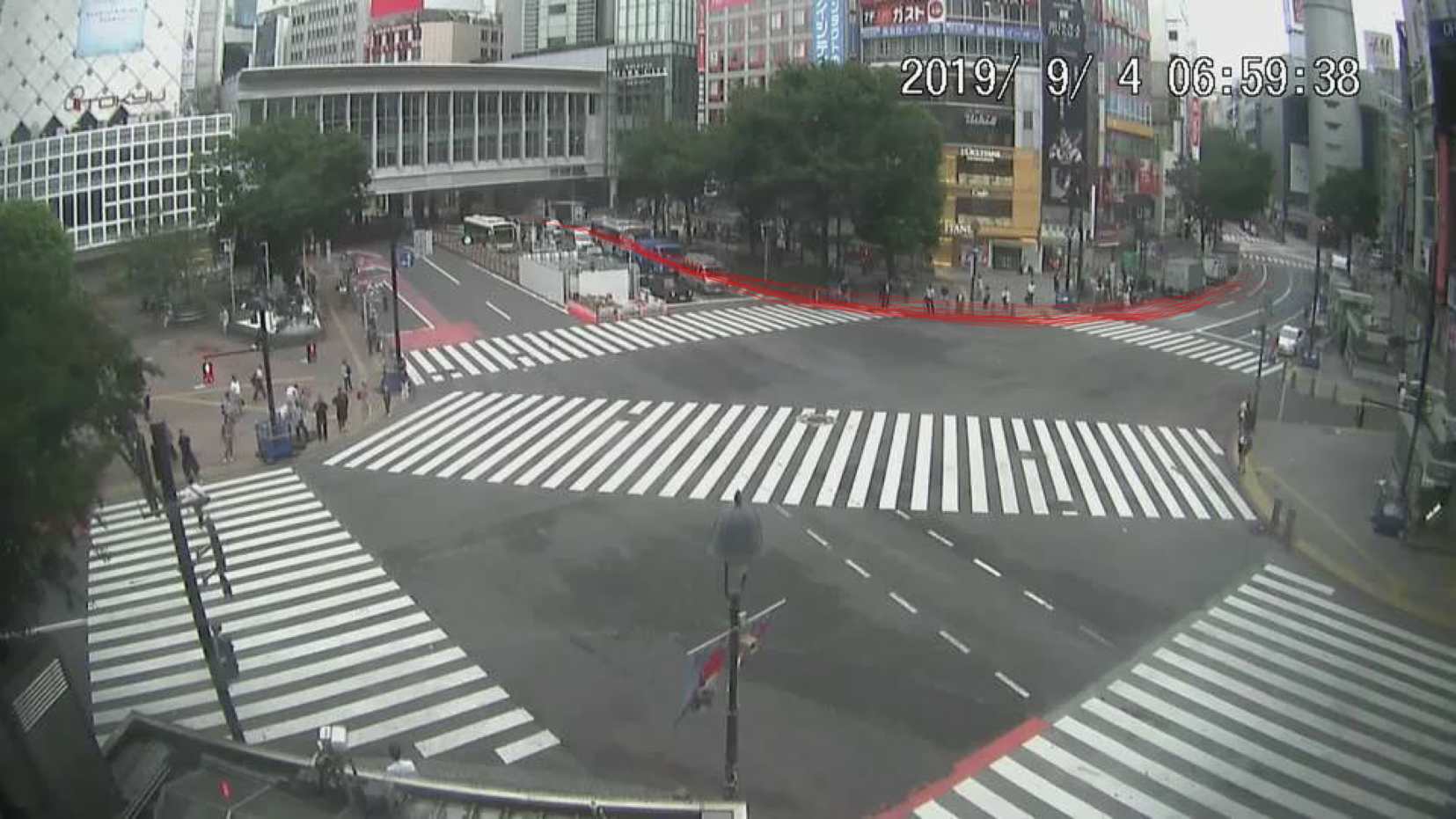} \\
(a) & (b)
\end{tabular} \vspace{5pt} \\
$\predbase{\textsf{InLane1}} \seq \predbase{\textsf{Any}} \seq \predbase{\textsf{InLane2}}$
\caption{(a) A video frame from a traffic camera, along with object trajectories (red) and manually annotated lanes (black). (b) The trajectories selected by the query (bottom),
which selects cars turning at the intersection.
}
\label{fig:example}
\end{figure*}

\section{Overview}
\label{sec:overview}

We consider a hypothetical scenario where an engineer is designing a control algorithm for an autonomous car and would like to identify certain driving patterns in video data. We show how they can use our framework to synthesize a query to identify car trajectories that exhibit a given behavior.

\para{Video data.}
Traffic cameras are a rich source of driving behaviors~\cite{robicquet2016learning,bock2019ind,bastaniskyquery};
one dataset used in our evaluation is YTStreams~\cite{bastani2020miris}, which includes video from several such cameras. 
Figure~\ref{fig:example} (a) shows a single frame from such a video;
we have used an object tracker~\cite{wojke2017simple} to identify all
car trajectories (in red).

\para{Predicates.}
\toolname assumes it is given a set of predicates that match portions of trajectories exhibiting behaviors of interest; during synthesis, it considers queries composed of these predicates. In Figure~\ref{fig:example} (a), the engineer has manually annotated the lanes of interest in this video (black), to specify four \textsf{InLaneK} predicates that select trajectories of cars driving in each lane $K$ visible in the video. Predicates may be configured by real-valued parameters. For example,
\begin{align*}
\basepred{\textsf{InLane1}}\wedge\basepred{\textsf{DispLt}_{\theta}}
\end{align*}
searches for trajectories where the car stays in lane 1 for a period of time
and the car has a displacement at most $\theta$ between the beginning and end of that period. Note that atomic predicates, like $\basepred{\textsf{DispLt}_{\theta}}$, can match multiple time-steps, whereas in formalisms like regular expressions and temporal logic, atomic predicates are over single time-steps.
A key feature of our framework is that the set of available predicates is highly extensible, and the user can provide their own. See Section~\ref{sec:setup} for the predicates we use in our evaluation.

\para{Synthesis.}
To specify a driving pattern, the engineer provides a small number of initial positive and negative examples of trajectories; then, \toolname synthesizes a query that correctly labels these examples.
In Figure~\ref{fig:example} (b), we show the result of executing the query shown, which is synthesized to identify left turns in the data. Often, there are multiple queries consistent with the initial examples. While it may be hard for users to sift through the video for positive examples, it is usually easy for them to label a given trajectory. Thus, to disambiguate, \toolname asks the user to label additional trajectories~\cite{roy2001toward,dasgupta2005analysis,ji2020question}.

\begin{figure}[t]
\centering
\includegraphics[width=0.55\textwidth]{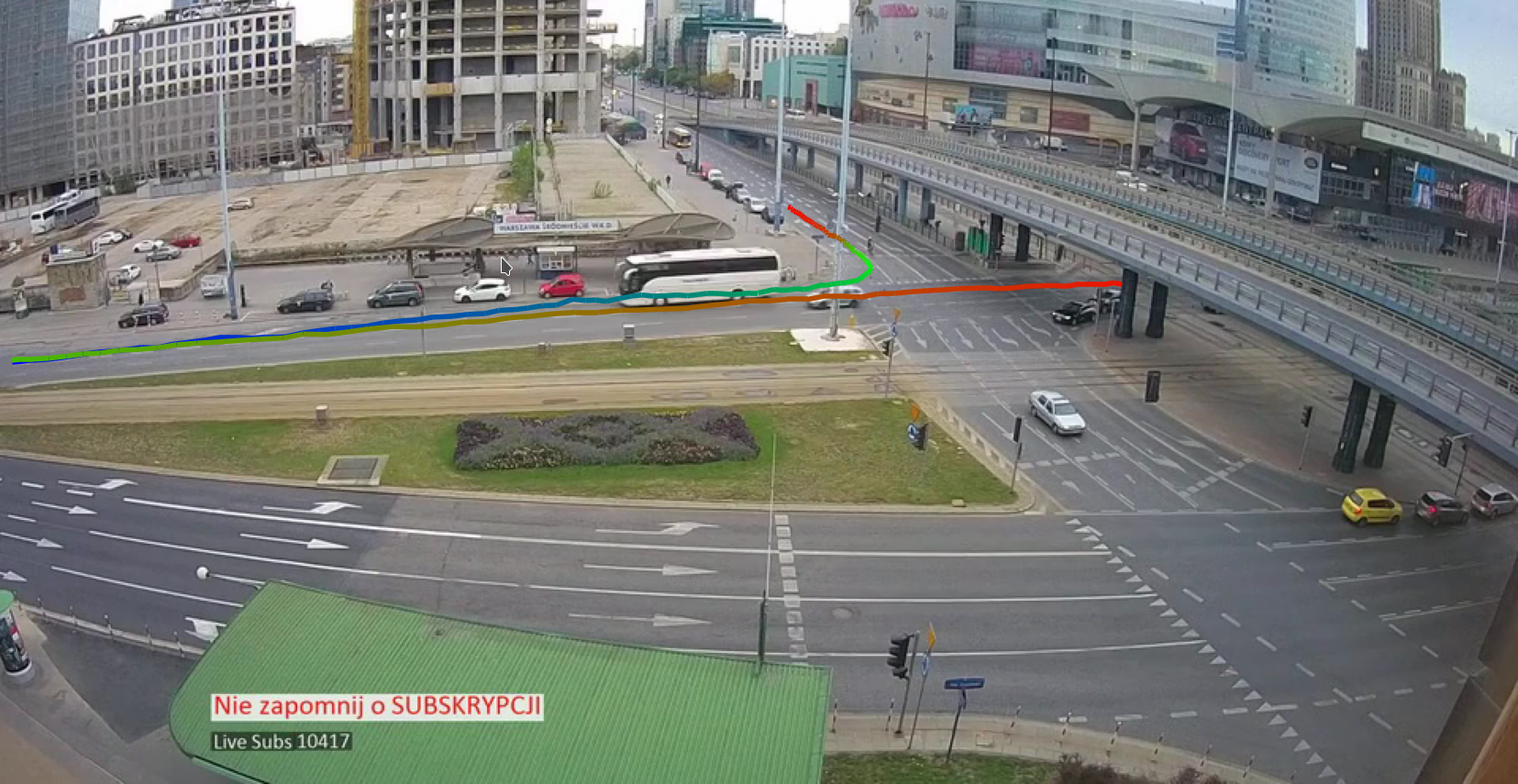} \vspace{10pt} \\
$\brackparen{\predbase{\textsf{InLane1}(A)} \seq \predbase{\textsf{Any}} \seq \predbase{\textsf{InLane2}(A)}} \wedge \predbase{\textsf{InLane2}(B)}$
\caption{A single match (top) for the multi-object query (bottom)
which captures one car, $A$, turning into a lane behind another car, $B$, that is in that lane. The trajectories change color from red to green as a function of time.
As can be seen, the car making the right turn does so just after the car going straight passes through the intersection.}
\label{fig:examplemulti}
\end{figure}

\para{Multi-object queries.}
So far, we have focused on queries that identify trajectories by processing each trajectory in isolation. A key feature of our framework is that users can express queries over multiple trajectories---for example,
\begin{align*}
\big(\basepred{\textsf{InLane1}(B)} \land \basepred{\textsf{ChangeLane2To1}(A)}\big) \seq \basepred{\textsf{InFront}(A,B)}.
\end{align*}
This query says that car $B$ is in lane 1 while car $A$ changes from lane 2 to lane 1, and car $A$ ends up in front of car $B$.  Note that the predicates now include variables indicating which object they refer to, and the predicate $\textsf{InFront}(A,B)$ refers to multiple objects. An example of a pair of trajectories selected by a multi-object query is shown in Figure~\ref{fig:examplemulti}.

\section{Query Language}
\label{sec:lang}

We describe our query language for matching object trajectories in videos. Our system first preprocesses the video using an object tracker to obtain trajectories, which are sequences $z=(x_0,x_1,...,x_{n-1})$ of states $x_i\in\mathcal{X}$. Then, a query $Q$ in our language maps each trajectory $z$ to a value $\mathbb{B}=\{0,1\}$ indicating whether it matches $z$. Our language is similar to both STL
and KAT.
One key difference is that predicates are over arbitrary subsequences of $z$ rather than single states $x$. In the main paper, we consider a simpler language, but in \refappendix{sec:langexts} we show how it can be extended to subsume both STL and KAT.

\para{Trajectories.}
We begin by describing the input to a query in our language, which is the representation of one or more concurrent object trajectories in a video.

Consider a space $\mathcal{S}$ corresponding to a single object detection in a single video frame---e.g., $s\in\mathcal{S}\subseteq\mathbb{R}^6$ might encode the 2D position, velocity, and acceleration of $s$ in image coordinates. When considering $m$ concurrent objects, let the space of \emph{states} $\mathcal{X} = \mathcal{S}^m$, and then a \emph{trajectory} $z\in\mathcal{Z} = \mathcal{X}^*$ is a sequence $z=(x_0,x_1,...,x_{n-1})$ of states of length $|z|=n$. We use the notation $z_{i:j}=(z_i,z_{i+1},...,z_{j-1})$ to denote a subtrajectory of $z$.

\para{Predicates.}
We assume a set of predicates $\Phi$ is given, where each predicate $\varphi\in\Phi$ matches trajectories $z\in\mathcal{Z}$; we use $\textsf{sat}_{\varphi}(z)\in\mathbb{B}=\{0,1\}$ to indicate that $\varphi$ matches $z$. As discussed below, queries in our language compose these predicates to match more complex patterns. 

Next, predicates in our language may have real-valued parameters that must be specified. We denote such a predicate $\varphi$ with parameter $\theta\in\mathbb{R}$ by $\varphi_\theta$. To enable our synthesis algorithm to efficiently synthesize these real-valued parameters, we leverage
the monotonicity in all such predicates we have used in our queries. In particular, we assume that the semantics of these predicates have the form
\begin{align*}
\llbracket\varphi_{\theta}\rrbracket(z)\coloneqq\mathbbm{1}(\iota_{\varphi}(z)\ge\theta),
\end{align*}
where $\iota_\varphi:\mathcal{Z}\to\mathbb{R}$ is a scoring function. We also assume that the range of $\iota_\varphi$ is bounded (which can be achieved with a sigmoid function, if necessary). For example, for the predicate $\textsf{DispLt}_{\theta}$, we have
$\iota_{\textsf{DispLt}}(z)=-\lVert z_0 - z_{n-1} \rVert$.
Thus, $\iota_{\textsf{DispLt}}(z)\ge\theta$ says the total displacement is at most $-\theta$.
We describe the predicates we include in Section~\ref{sec:setup}; they can easily be extended.

\para{Syntax.}
The syntax of our language is
\begin{align*}
Q &::= \varphi \mid Q \seq Q \mid Q^k \mid Q \land Q,
\end{align*}
where $Q^k=Q;Q;...;Q$ ($k$ times). That is, the base case is a single predicate $\varphi$, and queries can be composed using sequencing ($Q\seq Q$)
and conjunction ($Q\wedge Q$). Operators for disjunction, negation, Kleene star, and STL's ``until'' are discussed in \refappendix{sec:addops}. We describe constraints imposed on our language during synthesis in Section~\ref{sec:impl}.

\para{Semantics.}
The satisfaction semantics of queries have type
$\den{\cdot} : \mathcal{Q} \to \mathcal{Z} \to \mathbb{B}$,
where $\mathcal{Q}$ is the set of all queries in our language, $\mathcal{Z}$ is the set of trajectories, and $\mathbb{B}=\{0,1\}$.
In particular, $\den{Q}(z)\in\mathbb{B}$ indicates whether the query $Q$ matches trajectory $z$. The semantics are defined in Figure~\ref{fig:syntaxsemnatics}.
The base case of a single predicate $\varphi$ checks whether $\varphi$ matches $z$; conjunction $Q_1 \land Q_2$ checks if both conjuncts match; and sequencing $Q_1\seq Q_2$ checks if $z$ can be split into $z=z_{0:k}z_{k:n}$ in a way that $Q_1$ matches $z_{0:k}$ and $Q_2$ matches $z_{k:n}$. The semantics can be evaluated in time $O(\abs{Q} \cdot n^2)$.

\begin{figure}[t]
    \centering
    \begin{align*}
    \den{\varphi}(z) &\coloneqq \textsf{sat}_{\varphi}(z) \\
    \den{Q_1 \land Q_2}(z) &\coloneqq \den{Q_1}(z) \land \den{Q_2}(z) \\
    \den{Q_1 \seq Q_2}(z) &\coloneqq \bigvee_{k=0}^n ~ \den{Q_1}(z_{0:k}) \land \den{Q_2}(z_{k:n})
    \end{align*}
    \caption{Satisfaction semantics of our query language; $z\in\mathcal{Z}$ is a trajectory of length $n$ and $\varphi\in\Phi$ are predicates.
    Iteration ($Q^k$) can be expressed as repeated sequencing.
    }
    \label{fig:syntaxsemnatics}
\end{figure}

\section{Synthesis Algorithm}
\label{sec:algo}

We describe our algorithm for synthesizing queries consistent with a given set of examples. It performs a syntax-guided enumerative search over the space of possible queries~\cite{alur2013syntax}. In more detail, it enumerates \emph{sketches}, which are partial programs where only parameter values are missing. For each sketch, it uses a quantitative pruning strategy to compute the subset of the input parameters for which the resulting query is consistent with the given examples. A key contribution is how our algorithm uses quantitative semantics for quantitative pruning.

\subsection{Problem Formulation}
\label{sec:formulation}

\para{Partial queries.}
A \emph{partial query} is
in the grammar
\begin{align*}
Q ::=\; ?? \mid \varphi_{??} \mid \varphi \mid Q \seq Q \mid Q^k \mid Q \land Q.
\end{align*}
Note that there are two kinds of holes: (i) a \emph{predicate hole} $h=\;??$ that can be filled by a sub-query $Q$, and (ii) a \emph{parameter hole} $h=\varphi_{??}$ that can be filled by a real value $\theta_h\in\mathbb{R}$.
We denote the predicate holes of $Q$ by $\mathcal{H}_{\varphi}(Q)$, the parameter holes by $\mathcal{H}_{\theta}(Q)$, and let $\mathcal{H}(Q)=\mathcal{H}_{\varphi}(Q)\cup\mathcal{H}_{\theta}(Q)$.
A partial query $Q$ is a \emph{sketch} (denoted $Q\in\mathcal{Q}_{\text{sketch}}$)~\cite{solar2008program} if $\mathcal{H}_{\varphi}(Q)=\varnothing$, and is \emph{complete} (denoted $Q\in\bar{\mathcal{Q}}$) if $\mathcal{H}(Q)=\varnothing$.
For example, for $Q=\basepred{\textsf{DispLt}_{??1}}\wedge??2$, we have $\mathcal{H}_{\theta}(Q)=\{??1\}$ and $\mathcal{H}_{\varphi}(Q)=\{??2\}$. (We label each hole $h=\;??i$ with an identifier $i\in\mathbb{N}$ to distinguish them.)

\para{Refinements and completions.}
Given query $Q\in\mathcal{Q}$, predicate hole $h\in\mathcal{H}_{\varphi}(Q)$, and production $R=Q\to f(Q_1,...,Q_k)$
we can \emph{fill} $h$ with $R$ (denoted $Q'=\textsf{fill}(Q,h,R)$) by replacing $h$ with $f(??1,...,??k)$, where each $??i$ is a fresh hole, and similarly given a parameter hole $h\in\mathcal{H}_{\theta}(Q)$ and a value $\theta_h\in\mathbb{R}$.
We call $Q'$ a \emph{child} of $Q$ (denoted $Q\to Q'$).
Next, we call $Q''$ a \emph{refinement} of $Q$ (denoted $Q\xrightarrow{*}Q''$) if there exists a sequence $Q\to...\to Q''$; if furthermore $Q''\in\bar{\mathcal{Q}}$, we say it is a \emph{completion} of $Q$. For example, we have
\begin{align*}
??1
&\to \; ??2 \seq ??3
\to\predbase{\textsf{InLane1}} \seq ??3
\to \ldots.
\end{align*}
Here, $\predbase{\textsf{InLane1}} \seq ??3$ is a child (and refinement) of $??2 \seq ??3$ obtained by filling $??2$ with $Q\to\predbase{\textsf{InLane1}}$---i.e.,
\begin{align*}
\predbase{\textsf{InLane1}} \seq ??3
=\textsf{fill}(??2 \seq ??3, ??2, Q\to\predbase{\textsf{InLane1}}).
\end{align*}

\para{Parameters.}
We let $\theta\in\mathbb{R}^{|\mathcal{H}_\theta(Q)|}$ denote a choice of parameters for each $h\in\mathcal{H}_\theta(Q)$, let $\theta_h\in\Theta_h\subseteq\mathbb{R}$ denote the parameter for hole $h$, and let $Q_{\theta}$ denote the query obtained by filling each $h\in\mathcal{H}_{\theta}(Q)$ with $\theta_h$. Note that if $Q\in\mathcal{Q}_{\text{sketch}}$, then $Q_{\theta}\in\bar{\mathcal{Q}}$ is complete. For example, consider the sketch
\begin{align*}
Q=\basepred{\textsf{DispLt}_{??1}}\wedge\basepred{\textsf{MinLength}_{??2}}.
\end{align*}
This query has two holes, so its parameters are
$\theta\in\mathbb{R}^2$. If $\theta = (3.2, 5.0)$, then $\theta_{??1}=3.2$ is used to fill hole $??1$ and $\theta_{??2}=5.0$ is used to fill $??2$. In particular,
\begin{align*}
Q_{\theta}=\basepred{\textsf{DispLt}_{3.2}}\wedge\basepred{\textsf{MinLength}_{5.0}}.
\end{align*}

\para{Query synthesis problem.}
Given examples $W\subseteq\mathcal{W}=\mathcal{Z}\times\mathbb{B}$, where $\mathbb{B}=\{0,1\}$, our goal is to find a query $Q\in\bar{\mathcal{Q}}$ that correctly labels these examples---i.e.,
\begin{align*}
\psi_W(Q)\coloneqq\bigwedge_{(z,y)\in W}(\llbracket Q\rrbracket(z)=y).
\end{align*}
Thus, $\psi_W(Q)$ indicates whether $Q$ is consistent with the labeled examples $W$. Our goal is to devise a synthesis algorithm that is sound and complete---i.e., it finds a query that satisfies $\psi_W(Q) = 1$ if and only if one exists.

\subsection{Algorithm Overview}
Our algorithm enumerates sketches $Q\in\mathcal{Q}_{\text{sketch}}$; for each one, it tries to compute parameter values $\theta$ such that the completed query $Q_{\theta}$ is consistent with $W$---i.e., $\psi_W(Q_\theta)=1$. It can either stop once it has found a consistent query, or identify additional queries that are consistent with $W$. Algorithm~\ref{alg:main} shows this high-level strategy---at each iteration, it selects a sketch $Q$, determines a region $B$ of the parameter space containing consistent parameters $\theta\in B$, and adds $(Q,B)$ to a list of consistent queries that solve the synthesis problem.

The key challenge is searching over the space of continuous parameters $\theta$ for a given sketch $Q$ such that $Q_\theta$ is consistent with $W$. For efficiency, we rely heavily on pruning the search space. At a high level, consider evaluating a single candidate parameter $\theta$ on a single example $(z,y)\in W$---i.e., check whether $\llbracket Q_\theta\rrbracket(z)=y$. If this condition does not hold, then we can not only prune $\theta$ from the search space, but also a significant fraction of additional candidates. For instance, suppose $\llbracket Q_\theta\rrbracket(z)=1$ but $y=0$; if $\theta'\le\theta$ (in all components), then by a monotonicity property we prove for our semantics, we also have $\llbracket Q_{\theta'}\rrbracket(z)=1$. Thus, we can also prune $\theta'$.

Previous work has leveraged this property to prune the search space~\cite{learningmonotone,monotoniclogical,efficientenumerative}. Using a strategy based on binary search, for a given example $(z,y)\in W$, we can identify ``boundary'' parameters $\theta$ to accuracy $\varepsilon$ in $O(\log(1/\varepsilon))$ steps---i.e., compute $\theta$ for which $\llbracket Q_{\theta-\vec{\varepsilon}}\rrbracket(z)=1$ and $\llbracket Q_{\theta+\vec{\varepsilon}}\rrbracket(z)=0$.

Our algorithm avoids this binary search process, which can lead to
a significant speedup in practice. The key idea is to devise a quantitative semantics for queries that directly computes $\theta$; in fact, this quantitative semantics is closely related to robust temporal logic semantics, where the conjunction and disjunction of the satisfaction semantics are replaced with minimum and maximum, respectively.

\begin{algorithm}[t]
    \begin{algorithmic}[1]
    \Procedure{SynthesizeQuery}{$W$}
    \State $\mathcal{Q}_{\text{con}}\gets\varnothing$
    \For{$Q\in\mathcal{Q}_{\text{sketch}}$}
    \State $B\gets\textsf{SynthesizeParameters}(W,Q)$
    \State $\mathcal{Q}_{\text{con}}\gets\{(Q,B)\}$
    \EndFor
    \State \Return $\mathcal{Q}_{\text{con}}$
    \EndProcedure
    \end{algorithmic}
    \caption{Synthesizes consistent queries using the subroutine in Algorithm~\ref{alg:synthparams}}
    \label{alg:main}
\end{algorithm}

\subsection{Pruning with Boundary Parameters}

We begin by describing how ``boundary parameters'' can be used to prune a portion of the search space over parameters. First, for \emph{any} candidate parameters $\theta$, we can prune parameters $\theta'\le\theta$ (if $\den{Q_\theta}(z)=1$ and $y=0$) or $\theta'\ge\theta$ (if $\den{Q_\theta}(z)=0$ and $y=1$). Pruned regions of the parameter space take the form of hyper-rectangles, which we call \emph{boxes}. For convenience, let $\vec{\infty} \coloneqq (\infty, \ldots, \infty)$.
\begin{definition}
\rm
Given $x, y \in \bar{\mathbb{R}}^d$,
where $\bar{\mathbb{R}}=\mathbb{R}\cup\{\pm\infty\}$, a \emph{box} is an axis-aligned half-open hyper-rectangle
$\lfloor x, y \rceil \coloneqq \{v \mid x_i < v_i \leq y_i\}\subseteq\mathbb{R}^d$.
\end{definition}
The key property ensuring that parameters prune boxes of the search space is that the semantics are monotonically decreasing in $\theta$.
\begin{lemma}
\label{lem:montonic}
Given sketch $Q$, trajectory $z$, and two candidate parameters $\theta,\theta'\in\mathbb{R}^d$ such that $\theta\le\theta'$ component-wise, we have $\den{Q_\theta}(z)\ge\den{Q_{\theta'}}(z)$.
\end{lemma}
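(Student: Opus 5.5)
We prove the statement by structural induction on the sketch $Q$, establishing the stronger claim that for \emph{every} trajectory $z'$ (not just the given $z$), $\theta\le\theta'$ implies $\den{Q_\theta}(z')\ge\den{Q_{\theta'}}(z')$. Quantifying over all trajectories is necessary because sequencing evaluates subqueries on subtrajectories $z_{0:k}$ and $z_{k:n}$, so the induction hypothesis must apply to those as well. Throughout, we view $\mathbb{B}=\{0,1\}$ as ordered by $0\le 1$, so that $\land$ is $\min$ and $\bigvee$ is $\max$. Both operations are monotone in each argument.

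\emph{Base cases.} If $Q=\varphi$ has no parameter hole, then $Q_\theta=Q_{\theta'}=\varphi$ and the inequality holds with equality. If $Q=\varphi_{??}$ with hole $h$, then $\den{Q_\theta}(z')=\mathbbm{1}(\iota_\varphi(z')\ge\theta_h)$. Since $\theta_h\le\theta'_h$, the condition $\iota_\varphi(z')\ge\theta'_h$ implies $\iota_\varphi(z')\ge\theta_h$, and therefore $\mathbbm{1}(\iota_\varphi(z')\ge\theta_h)\ge\mathbbm{1}(\iota_\varphi(z')\ge\theta'_h)$. This is exactly where the assumed monotone form of parameterized predicates is used.

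\emph{Inductive cases.} The parameter vector of $Q_1\land Q_2$ or $Q_1\seq Q_2$ is the concatenation of the parameter vectors of $Q_1$ and $Q_2$, since holes are labeled distinctly. Hence $\theta\le\theta'$ restricts to componentwise inequalities $\theta|_{Q_i}\le\theta'|_{Q_i}$, and $(Q_i)_{\theta}$ means $Q_i$ filled with the restricted vector.

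For conjunction, $\den{(Q_1\land Q_2)_\theta}(z')=\min_i \den{(Q_i)_\theta}(z')$. Applying the induction hypothesis to each conjunct and using monotonicity of $\min$ gives the result.

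For sequencing, the induction hypothesis holds for all trajectories, so for each split point $k$ we have $\den{(Q_1)_\theta}(z'_{0:k})\ge\den{(Q_1)_{\theta'}}(z'_{0:k})$, and similarly for $Q_2$ on $z'_{k:n}$. Taking the $\min$ for each $k$ and then the $\max$ over $k$ preserves the inequality.

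For iteration, $Q^k$ unfolds to repeated sequencing. The only subtlety is that the copies of $Q$ each receive their own parameter holes under the sketch conventions. Either way, the result follows by induction on $k$ using the sequencing case.

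The only step that requires care is the bookkeeping of how $\theta$ is distributed among subqueries, in particular for $Q^k$ if the holes are shared between copies. If they are shared, the same restricted vector is used in every copy, and the inequality still holds componentwise, so the argument goes through unchanged. Every other step is routine monotonicity of $\min$ and $\max$.
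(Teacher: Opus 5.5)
Your proof is correct and follows the same approach as the paper: structural induction, with the base case given by the antitone indicator $\mathbbm{1}(\iota_\varphi(z)\ge\theta_h)$ and the inductive cases by monotonicity of conjunction ($\min$) and disjunction ($\max$). You also strengthen the hypothesis to all trajectories, which the sequencing case on subtrajectories requires; the paper's one-line sketch leaves this implicit.
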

The proof follows by structural induction on the query semantics: the base case follows since the semantics $\mathbbm{1}(\iota_\varphi(z)\ge\theta_k)$ for predicates is monotonically decreasing in $\theta_k$, and the inductive case follows since conjunction and disjunction are monotonically increasing in their inputs (so they are also monotonically decreasing in $\theta_k$). Below, we show how monotonicity ensures that we can prune whole regions of the search space if we find boundary parameters.

As an example, suppose we have two trajectories, $z_0$ of a car driving quickly and then slowly, and $z_1$ of a car driving slowly and then quickly, and that we are trying to synthesize a query for $W=\{(z_0,0),(z_1,1)\}$. For simplicity, we assume both $z_0=(0.9,0.6)$ and $z_1=(0.5,0.8)$ have just two time steps each, with just a single component representing velocity. Furthermore, we assume there is just a single predicate $\expred{\theta}$ matching time steps where the velocity is at least $\theta$, where $\theta$ is a real-valued parameter. Since $\expred{\theta}$ matches single time steps, the satisfaction semantics is $0$ except on trajectories of length $1$, so:
\begin{align*}
\iota_{\expredname}((z_0)_{0:1}) &= 0.9  &  \iota_{\expredname}((z_0)_{1:2}) &= 0.6  &  \iota_{\expredname}((z)_{i:i}) &= -\infty \\
\iota_{\expredname}((z_1)_{0:1}) &= 0.5  &  \iota_{\expredname}((z_1)_{1:2}) &= 0.8  &  \iota_{\expredname}((z)_{0:2}) &= -\infty
\end{align*}
Consider the sketch $Q = \expred{??1}; \expred{??2}$. We can see that the candidate parameters $(0.5, 0.6)$ satisfy $\den{Q_{(0.5,0.6)}}(z_1)=1$:
\begin{align*}
\llbracket Q_{(0.5, 0.6)} \rrbracket((z_1)_{0:n}) &= \bigvee_{k=0}^2 \llbracket \expred{0.5} \rrbracket((z_1)_{0:k}) \land \llbracket \expred{0.6} \rrbracket((z_1)_{k:n})\\
&= \llbracket \expred{0.5} \rrbracket((z_1)_{0:1}) \land \llbracket \expred{0.6} \rrbracket((z_1)_{1:2})\\
&= \mathbbm{1}(0.5 \geq 0.5) \land \mathbbm{1}(0.8 \geq 0.6) \\
&= 1,
\end{align*}
where the second equality holds because $\expred{\theta}$ matches only length-$1$ trajectories, so the $k=0$ and $k=2$ cases evaluate to $0$. Since the semantics are monotonically decreasing, we have $\den{Q_\theta}(z_1)=1$ for any $\theta \in \boxlit{(-\infty, -\infty)}{(0.5, 0.6)}$.

Notice, however, that if we were to move any $\vec{\varepsilon} > 0$ upward, we would have $\den{Q_{(0.5 + \varepsilon_1, 0.6 + \varepsilon_2)}}(z_1) = \mathbbm{1}(0.5 \geq 0.5 + \varepsilon_1) \land \mathbbm{1}(0.8 \geq 0.6 + \varepsilon_2) = 0$. So we know $\den{Q_\theta}(z_1)=0$ for any $\theta \in \boxlit{(0.5, 0.6)}{(\infty, \infty)}$. This is because $(0.5, 0.6)$ lies on the boundary between $\{\theta'\mid\den{Q_{\theta'}}(z)=0\}$ and $\{\theta'\mid\den{Q_{\theta'}}(z)=1\}$. This boundary plays a key role in our algorithm.

\begin{definition}
\rm
Given a sketch $Q$ with $d$ parameter holes and a trajectory $z$, we say $\theta \in \mathbb{R}^d \cup \{\bot, \top\}$ is a \emph{boundary parameter} if one of the following holds:
\begin{itemize}
\item $\theta \in \mathbb{R}^d$ and $\llbracket Q_\theta \rrbracket(z) = 1$, but $\llbracket Q_{\theta'} \rrbracket(z) = 0$ for all $\theta' \in \boxlit{\theta}{\vec{\infty}}$
\item $\theta = \bot$ and $\llbracket Q_{\theta'} \rrbracket(z) = 0$ for all $\theta' \in \boxlit{-\vec{\infty}}{\vec{\infty}}$
\item $\theta = \top$ and $\llbracket Q_{\theta'} \rrbracket(z) = 1$ for all $\theta' \in \boxlit{-\vec{\infty}}{\vec{\infty}}$
\end{itemize}
\end{definition}
In the first case, by monotonicity, we also have $\den{Q_{\theta'}}(z)=1$ for all $\theta'\in\boxlit{-\vec{\infty}}{\theta}$; thus, $\theta$ lies on the boundary between parameters $\theta'$ where $Q_{\theta'}$ evaluates to $1$ and those where it evaluates to $0$. The second and third cases are where $Q_{\theta'}$ always evaluates to $0$ and $1$, respectively.

Given a boundary parameter $\theta$ for an example $(z,y)\in W$, we can prune $\boxlit{\theta}{\vec{\infty}}$ if $y=1$ or $\boxlit{-\vec{\infty}}{\theta}$ if $y=0$. Intuitively, boundary parameters provide optimal pruning along a fixed direction in the parameter space. Thus, our algorithm focuses on computing boundary parameters for pruning.

In Figure~\ref{fig:prune} (a), if $\theta_1$ is a boundary parameter for $z_1$, we know that the blue region satisfies $z_1$, and thus is consistent with the label $1$, while the red region dissatisfies $z_1$, and thus is inconsistent with the label $1$. Similarly, in Figure~\ref{fig:prune} (b), if $\theta_0$ is a boundary parameter for $z_0$, we know that the red satisfies $z_1$, and thus is inconsistent with the label $0$, while the blue dissatisfies $z_0$, and thus is consistent with the label $0$.

\begin{figure*}[t]
\centering
\includegraphics[width=\textwidth]{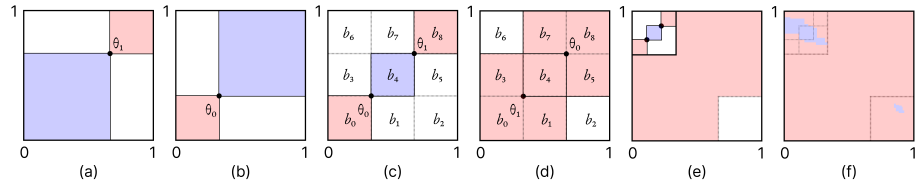}
\caption{(a) shows a boundary parameter, $\theta_1$, for $z_1$, and a region that is inconsistent with $z_1$ and can be pruned (red), as well as a region that is consistent with it (blue). (b) similarly shows a boundary parameter $\theta_0$ for $z_0$. (c) shows the pruning pair composed of $\theta_0$ and $\theta_1$, a region consistent with both (blue), and regions inconsistent with either (red). (d) is the same as (c), but if $\theta_0$ and $\theta_1$ swapped places. The labels $b_0$ through $b_8$ denote analogous boxes in (c) \& (d). (e) shows how, if (d) were the result of the first step of search and $b_6$ were chosen next, search could proceed. (f) shows ground truth consistent (blue) and inconsistent (red) regions that the search process in (d) \& (e) might converge toward.}
\label{fig:prune}
\end{figure*}

\subsection{Pruning with Pairs of Boundary Parameters}

To extend pruning to the entire dataset $W$, we could simply prune the union of the individual pruned regions for each $(z,y)\in W$. However, one important feature of our approach is that we can also establish regions of the parameter space where the parameters are guaranteed to be consistent with $W$. To formalize this idea, we introduce the concept of a ``pruning pair'', which is a pair of boundary parameters which might allow us to find such a consistent region.
\begin{definition}
\rm
Given a sketch $Q$ and a dataset $W$, a pair of boundary parameters $\theta^-, \theta^+ \in \mathbb{R}^d \cup \{\bot, \top\}$ is a \emph{pruning pair} for $Q$ and $W$ if all of the following hold:
\begin{itemize}
\item $\theta^+$ is a boundary parameter for some $z \in W^+$ and, for all $z'\in W^+$ such that $z' \neq z$, we have $\llbracket Q_{\theta^+} \rrbracket(z') = 1$.
\item $\theta^-$ is a boundary parameter for some $z \in W^-$ and, for all $z'\in W^-$ such that $z' \neq z$, we have $\llbracket Q_{\theta^-} \rrbracket(z') = 0$.
\item $\theta^- < \theta^+$ or $\theta^- \geq \theta^+$.
\end{itemize}
If $\theta^- < \theta^+$, the pruning pair $(\theta^-, \theta^+)$ is \emph{consistent}, and \emph{inconsistent} otherwise.
\end{definition}
Our algorithm searches for pruning pairs along a fixed direction---i.e., it considers a curve $L\subseteq\mathbb{R}^d$ and looks for the following pruning pair along $L$:
\begin{align*}
\theta^+=\sup\left\{\theta\in L\biggm\vert\bigwedge_{z\in W^+}\den{Q_\theta}(z)=1\right\},
\quad
\theta^-=\inf\left\{\theta\in L\biggm\vert\bigwedge_{z\in W^-}\den{Q_\theta}(z)=0\right\}.
\end{align*}
Intuitively, $\theta^+$ is the largest $\theta$ that correctly classifies all positive examples, and conversely for $\theta^-$. We restrict to curves $L$ that are monotonically increasing in all components, in which case the supremum and infimum are well defined since $L$ comes with a total ordering (from its smallest point to its largest) that is consistent with the standard partial order on $\mathbb{R}^d$. Then, $(\theta^-,\theta^+)$ form a pruning pair: since $\theta^+$ is a boundary parameter for $z$, if we take $\theta^+$ to be any larger, then we must have $\den{Q_\theta}(z)=0$ for some $z\in W^+$, and similarly for $\theta^-$.

Given a curve $L$, we can compute an approximation to $\theta^+$ and $\theta^-$ via binary search. However, our algorithm avoids the need to do so by directly computing $\theta^+$ and $\theta^-$ using a quantitative semantics, which we describe in Section~\ref{sec:quantitative}.

Figure~\ref{fig:prune} (c) shows how the pair of boundary parameters $\theta_0$ for $z_0$ and $\theta_1$ for $z_1$ (where $L$ is the diagonal line) prunes the parameter space. The blue region is guaranteed to be consistent with $W$, as it is the intersection of the region below $\theta^+$, which must satisfy $\den{Q_\theta}(z_1)=1$, and the region above $\theta^-$, which must satisfy $\den{Q_\theta}(z_0)=0$. Conversely, the red regions are inconsistent with either $z_0$ or $z_1$, and therefore with $W$.
Thus, the red regions can be pruned, whereas the blue regions are solutions to our synthesis problem. Note that the red region is the union of the red regions in Figure~\ref{fig:prune} (a) \& (b), whereas the blue region is the intersection of the blue regions in Figure~\ref{fig:prune} (a) \& (b).

This pattern holds for any consistent pruning pair ($\theta^- < \theta^+$); if instead the pair is inconsistent ($\theta^- \geq \theta^+$), then the resulting pattern is illustrated in Figure~\ref{fig:prune} (d); in this case, we can prune the red regions as before, but there is no blue region of solutions.
In general, for a $d$ dimensional parameter space, a pruning pair divides the parameter space into $3^d$ boxes (i.e., for each dimension, the box can be below, in line with, or above the center box). The regions below $\theta^-$ and above $\theta^+$ can be pruned, and the region between $\theta^-$ and $\theta^+$ (if one exists) contains synthesis solutions. Precisely, it follows from the definitions and monotonicity that:

\begin{lemma}
\label{lem:prunesound}
Every $\theta \in \boxlit{-\vec{\infty}}{\theta^-}$ and $\theta \in \boxlit{\theta^+}{\infty}$ is inconsistent with $W$, and every $\theta \in \boxlit{\theta^-}{\theta^+}$ is consistent with $W$.
\end{lemma}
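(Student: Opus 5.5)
The plan is to reduce each of the three claims to Lemma~\ref{lem:montonic} together with the defining properties of the pruning pair $(\theta^-,\theta^+)$. Throughout, write $z^+\in W^+$ for the positive example for which $\theta^+$ is a boundary parameter. Likewise, write $z^-\in W^-$ for the negative example for which $\theta^-$ is a boundary parameter. First I would dispose of the degenerate cases $\theta^\pm\in\{\bot,\top\}$. In these cases the boxes $\boxlit{-\vec{\infty}}{\theta^-}$ and $\boxlit{\theta^+}{\vec{\infty}}$ should be read as either empty or all of $\boxlit{-\vec{\infty}}{\vec{\infty}}$. The claims then follow directly from the second and third clauses of the definition of boundary parameter, since those clauses say the example evaluates to a constant everywhere. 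After that I would assume $\theta^-,\theta^+\in\mathbb{R}^d$.

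\emph{The region above $\theta^+$.} Take $\theta\in\boxlit{\theta^+}{\vec{\infty}}$. The first clause of the boundary-parameter definition applied to $z^+$ gives $\den{Q_\theta}(z^+)=0$, while the label of $z^+$ is $1$. Hence $\theta$ is inconsistent with $W$.

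\emph{The region below $\theta^-$.} Take $\theta\in\boxlit{-\vec{\infty}}{\theta^-}$. Then $\theta\le\theta^-$ componentwise, and $\den{Q_{\theta^-}}(z^-)=1$ because $\theta^-$ is a boundary parameter for $z^-$. Lemma~\ref{lem:montonic} then gives $\den{Q_\theta}(z^-)\ge\den{Q_{\theta^-}}(z^-)=1$. But $z^-$ is labeled $0$, so $\theta$ is again inconsistent.

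\emph{The middle region.} Take $\theta\in\boxlit{\theta^-}{\theta^+}$, so that $\theta^-<\theta\le\theta^+$ componentwise. For every positive example $z'\in W^+$ we have $\den{Q_{\theta^+}}(z')=1$. For $z'\neq z^+$ this is the first pruning-pair condition, and for $z'=z^+$ it is the boundary property. Since $\theta\le\theta^+$, Lemma~\ref{lem:montonic} gives $\den{Q_\theta}(z')=1$. Now consider a negative example $z'\in W^-$. If $z'=z^-$, the strict inequality $\theta^-<\theta$ places $\theta$ in $\boxlit{\theta^-}{\vec{\infty}}$, and the boundary property gives $\den{Q_\theta}(z^-)=0$. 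If $z'\neq z^-$, the pruning pair condition gives $\den{Q_{\theta^-}}(z')=0$, and monotonicity with $\theta^-\le\theta$ yields $\den{Q_\theta}(z')\le 0$. Combining these, every example in $W$ is labeled correctly, so $\psi_W(Q_\theta)=1$.

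The main obstacle I anticipate is the degenerate $\bot/\top$ bookkeeping, which requires fixing how the box notation is interpreted in those cases. A second delicate point is the asymmetry of the half-open boxes: the lower box includes $\theta^-$ itself, which is exactly why inconsistency there needs $\den{Q_{\theta^-}}(z^-)=1$. Beyond that, the argument is a direct unwinding of the definitions.
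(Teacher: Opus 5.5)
Your proof is correct and is exactly the unwinding of the definitions plus Lemma~\ref{lem:montonic} that the paper has in mind when it says the lemma ``follows from the definitions and monotonicity'' (no further argument is given there). You also correctly pair the strict side of each half-open box with the boundary clause and the closed side with monotonicity. One small imprecision remains when $\theta^+=\top$ or $\theta^-=\bot$: there the middle-box claim also involves the non-distinguished examples, so the second and third clauses of the boundary-parameter definition are not enough on their own. You must additionally read the pruning-pair conditions $\den{Q_{\top}}(z')=1$ and $\den{Q_{\bot}}(z')=0$ as saying those examples are satisfied (resp.\ unsatisfied) for every parameter, which is what the quantitative-semantics construction in fact produces.
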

The remaining boxes need to be further analyzed by our algorithm.

\subsection{Pruning Parameter Search Algorithm}
\begin{algorithm}[t]
\begin{algorithmic}[1]
\Procedure{SynthesizeParameters}{$W, Q$}
\State $B_{\text{con}}\gets\varnothing$; $B_{\text{inc}}\gets\varnothing$, $B_{\text{unk}}\gets\{b_{\operatorname{initial}}\}$
\For{$i\in\{1,...,N\}$}
\State $b\gets\textsf{Pop}(B_{\text{unk}})$
\State $\theta^-, \theta^+ \gets \textsf{ComputePruningPair}(W,Q,b)$
\State $b_{\text{center}} \gets \boxlit{\min\{\theta^-, \theta^+\}}{\max\{\theta^-, \theta^+\}}$
\State $b_{\text{lower}}, b_{\text{upper}}, B_{\text{incomp}}, B_{\text{extra}} \gets \textsf{SplitBox}(b, b_{\text{center}})$
\If{$\theta^- < \theta^+$}
\State $B_{\text{con}} \gets B_{\text{con}} \cup \{b_{\text{center}}\}$
\State $B_{\text{inc}} \gets B_{\text{inc}} \cup \{b_{\text{lower}}, b_{\text{upper}}\}$
\State $B_{\text{unk}} \gets B_{\text{unk}} \cup B_{\text{incomp}} \cup B_{\text{extra}}$
\ElsIf{$\theta^- \geq \theta^+$}
\State $B_{\text{inc}} \gets B_{\text{inc}} \cup \{b_{\text{center}}, b_{\text{lower}}, b_{\text{upper}}\} \cup B_{\text{extra}}$
\State $B_{\text{unk}} \gets B_{\text{unk}} \cup B_{\text{incomp}}$
\EndIf
\EndFor
\State \Return $B_{\text{con}}$
\EndProcedure
\end{algorithmic}
\caption{Synthesizes consistent parameters for a given sketch}
\label{alg:synthparams}
\end{algorithm}

Next, we describe Algorithm~\ref{alg:synthparams}, which searches over the space of parameters to fill a sketch $Q$ for a given dataset $W$. The algorithm uses a subroutine that takes a box and returns a pruning pair in that box, which we describe in Section~\ref{sec:quantitative}. Given this subroutine, our algorithm maintains a work-list of ``unknown'' boxes (i.e., unknown whether parameters in these boxes are consistent or inconsistent with $W$). At each iteration, it pops a box from the work-list (in first-in-first-out order), uses the given subroutine to find a pruning pair inside that box, applies the pruning procedure described in the previous section, and then adds each new unknown box to the work-list.

For the last step, the current box $b$ is divided into $3^d$ smaller boxes.  The box $b_{\text{center}}\coloneqq\boxlit{\min\{\theta^-,\theta^+\}}{\max\{\theta^-,\theta^+\}}$ is pruned (added to $B_{\text{inc}}$) if the pair $(\theta^-,\theta^+)$ is inconsistent, and contains solutions to the synthesis problem otherwise (added to $B_{\text{con}}$). The boxes $b_{\text{lower}}=\boxlit{-\vec{\infty}}{\min\{\theta^-,\theta^+\}}$ and $b_{\text{upper}}=\boxlit{\max\{\theta^-,\theta^+\}}{\vec{\infty}}$ are always pruned. The boxes $b\in B_{\text{incomp}}$ are the remaining corners of $b$, and always have indeterminate consistency (added to $B_{\text{unk}})$. The remaining boxes $b\in B_{\text{extra}}$ are indeterminate if $(\theta^-,\theta^+)$ is consistent, and inconsistent otherwise. In our example, if the first step of the algorithm yielded Figure~\ref{fig:prune} (d), then the second step might pop $b_6$ and yield Figure~\ref{fig:prune} (e).

The following soundness result follows directly from Lemma~\ref{lem:prunesound}.%
\begin{theorem}
\label{thm:algsound}
In Algorithm~\ref{alg:synthparams}, every $\theta\in B_{\operatorname{con}}$ is consistent with $W$ for $Q$, and every $\theta\in B_{\operatorname{inc}}$ inconsistent.
\end{theorem}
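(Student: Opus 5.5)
The plan is to argue by induction on the iterations of the loop in Algorithm~\ref{alg:synthparams}. The invariant is that every box in $B_{\text{con}}$ consists only of parameters consistent with $W$, and every box in $B_{\text{inc}}$ only of parameters inconsistent with $W$. Initially both sets are empty, so the invariant holds trivially. Since $B_{\text{unk}}$ carries no claim, only the additions to $B_{\text{con}}$ and $B_{\text{inc}}$ made in a single iteration need checking. Fix an iteration: it pops $b$ and obtains a pruning pair $(\theta^-,\theta^+)$ from \textsf{ComputePruningPair}. I will assume, as Section~\ref{sec:quantitative} is meant to guarantee, that this pair is a genuine pruning pair for $Q$ and $W$, so that Lemma~\ref{lem:prunesound} applies. (If the subroutine only produces pairs relative to the box $b$, the same argument goes through after intersecting every region with $b$.)

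\emph{Consistent case} ($\theta^-<\theta^+$). Here $\min\{\theta^-,\theta^+\}=\theta^-$ and $\max\{\theta^-,\theta^+\}=\theta^+$. So $b_{\text{center}}=\boxlit{\theta^-}{\theta^+}$ is consistent by the second half of Lemma~\ref{lem:prunesound}, which justifies adding it to $B_{\text{con}}$. Likewise $b_{\text{lower}}=\boxlit{-\vec{\infty}}{\theta^-}$ and $b_{\text{upper}}=\boxlit{\theta^+}{\vec{\infty}}$ are inconsistent by the first half, which justifies adding them to $B_{\text{inc}}$. The boxes placed in $B_{\text{unk}}$ need no argument.

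\emph{Inconsistent case} ($\theta^-\ge\theta^+$). Now $b_{\text{lower}}=\boxlit{-\vec{\infty}}{\theta^+}$ and $b_{\text{upper}}=\boxlit{\theta^-}{\vec{\infty}}$, and $b_{\text{center}}$ and the boxes of $B_{\text{extra}}$ must also be shown inconsistent. Lemma~\ref{lem:prunesound} covers only the regions lying below $\theta^-$ or above $\theta^+$, so I will show directly that every box added in this case lies in $\boxlit{-\vec{\infty}}{\theta^-}\cup\boxlit{\theta^+}{\vec{\infty}}$, or at least in the analogous union of closed-below and open-above regions.

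The argument for a point $\theta$ in one of these boxes goes as follows. Either $\theta\le\theta^-$ componentwise, or $\theta\not\le\theta^-$. In the first case, monotonicity (Lemma~\ref{lem:montonic}) and the definition of $\theta^-$ as a boundary parameter for some negative $z$ give $\den{Q_\theta}(z)=1$. This contradicts the label $0$, so $\theta$ is inconsistent. Otherwise, by the construction of the $3^d$ grid in \textsf{SplitBox}, the boxes assigned to $B_{\text{center}}\cup B_{\text{extra}}\cup\{b_{\text{upper}}\}$ have every coordinate strictly above $\theta^+_i=\min_i$. Such a $\theta$ lies in $\boxlit{\theta^+}{\vec{\infty}}$, so it fails some positive example by the boundary property of $\theta^+$. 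The real content is a careful coordinatewise description of $B_{\text{extra}}$: it consists exactly of the grid boxes lying above $\min\{\theta^-,\theta^+\}$ in all coordinates or below $\max\{\theta^-,\theta^+\}$ in all coordinates, excluding the center. Each such box is then contained in $\boxlit{\theta^+}{\vec{\infty}}$ or $\boxlit{-\vec{\infty}}{\theta^-}$ when $\theta^+\le\theta^-$, while the mixed corners form $B_{\text{incomp}}$.

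I expect the main obstacle to be this bookkeeping of \textsf{SplitBox}. The paper defines $B_{\text{extra}}$ and $B_{\text{incomp}}$ only informally, so the proof must fix that definition and check the containments. Edge cases where $\theta^\pm\in\{\bot,\top\}$ are handled by reading $\bot$ as $-\vec{\infty}$ and $\top$ as $\vec{\infty}$, or vice versa, making the relevant boxes empty or total. With the invariant established for every iteration, the returned $B_{\text{con}}$ and the final $B_{\text{inc}}$ satisfy the theorem.
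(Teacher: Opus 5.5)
Your proposal is correct and takes the same route as the paper, which simply asserts that the theorem follows directly from Lemma~\ref{lem:prunesound}. Your invariant argument, plus the observation that when $\theta^+\le\theta^-$ every box added to $B_{\text{inc}}$ lies either at or below $\max\{\theta^-,\theta^+\}=\theta^-$ or strictly above $\min\{\theta^-,\theta^+\}=\theta^+$ in every coordinate, is exactly the bookkeeping that assertion leaves implicit. Two small slips: your blanket claim that all of $b_{\text{center}}\cup B_{\text{extra}}\cup\{b_{\text{upper}}\}$ lies strictly above $\theta^+$ is false for the $B_{\text{extra}}$ boxes whose coordinates lie in the lower and middle bands (these are covered by $\boxlit{-\vec{\infty}}{\theta^-}$ instead), though your subsequent coordinatewise description of $B_{\text{extra}}$ gives the correct containment; and $\bot,\top$ must be read as $-\vec{\infty},\vec{\infty}$ respectively, not ``vice versa''.
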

In addition, the algorithm is complete for almost all parameters:
\begin{theorem}
\label{thm:algcomplete}
The Lebesgue measure of $\left\{\theta \in b \mid b \in B_{\operatorname{unk}}\right\} \to 0$ as $N\to\infty$.
\end{theorem}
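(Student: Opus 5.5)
We aim to show that the total volume of the unknown boxes shrinks geometrically along the first-in-first-out work-list, so that it tends to $0$ as $N \to \infty$. Because the parameter space may be unbounded, we first restrict attention to the bounded initial box $b_{\operatorname{initial}}$. Each $\Theta_h$ is bounded, since the range of $\iota_\varphi$ is bounded, and so $b_{\operatorname{initial}}$ has finite measure. We also fix how $\textsf{ComputePruningPair}$ chooses its curve $L$: we take it to be the diagonal of the box $b$ currently being processed. The plan is to track a single popped box $b$ and bound the total measure of the unknown boxes it produces, $B_{\text{incomp}} \cup B_{\text{extra}}$, by a constant fraction $c<1$ of $\mu(b)$, where $c$ depends only on $d$.

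The key geometric step works as follows. Write $b = \boxlit{a}{a'}$, and let the pruning pair lie on the diagonal, at $p = \min\{\theta^-,\theta^+\} = a + s(a'-a)$ and $q = \max\{\theta^-,\theta^+\} = a + t(a'-a)$ with $0 \le s \le t \le 1$. These points split each coordinate interval into three pieces, of relative lengths $s$, $t-s$ and $1-t$, which produces the $3^d$ sub-boxes.

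The removed boxes are $b_{\text{lower}}$ and $b_{\text{upper}}$, together with $b_{\text{center}}$ when the pair is consistent; these are pruned or classified by Lemma~\ref{lem:prunesound} and Theorem~\ref{thm:algsound}. Their combined measure is at least $(s^d + (1-t)^d + (t-s)^d)\,\mu(b)$. By convexity this is at least $3^{1-d}\mu(b)$, because the three relative lengths sum to $1$. The same bound holds in the inconsistent case, where $b_{\text{center}}$ is also pruned. Hence the unknown children of $b$ have total measure at most $(1 - 3^{1-d})\,\mu(b)$.

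Degenerate cases need separate handling. If $\theta^\pm \in \{\bot,\top\}$, we interpret these values as the endpoints of $L$, namely $a$ and $a'$. The bound still holds, since then a whole corner box, or all of $b$, is classified at once.

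Next we convert the per-box bound into the global statement. The work-list is processed in FIFO order, so it can be organized into generations: generation $k+1$ consists of the unknown children of generation-$k$ boxes. The per-box bound gives that the total measure of generation $k$ is at most $(1-3^{1-d})^k \mu(b_{\operatorname{initial}})$. Each box spawns at most $3^d$ children, so generation $k$ has at most $3^{dk}$ boxes. After $N \ge \sum_{j<k} 3^{dj}$ iterations, every generation below $k$ has been popped. The remaining unknown boxes therefore all lie in generations $\ge k$, and their total measure is at most the measure of generation $k$, because descendants are contained in their ancestors. Letting $N \to \infty$ forces $k \to \infty$, which gives the claim.

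The main obstacle is justifying the per-box bound. That bound requires the computed pruning pair to satisfy $a \le p \le q \le a'$ along the diagonal, with $b_{\text{lower}}$ and $b_{\text{upper}}$ intersected with $b$. In particular, we must check that $\textsf{ComputePruningPair}$, as defined by the supremum and infimum over $L \cap b$, really does return points on $L$ within $b$. Monotonicity (Lemma~\ref{lem:montonic}) makes both sets $L \cap b$-intervals, and this is what ensures the supremum and infimum land on $L$.
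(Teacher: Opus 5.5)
Your proposal is correct and follows essentially the same argument as the paper. FIFO order makes the search breadth-first, and at every depth the three diagonal boxes $b_{\text{lower}}$, $b_{\text{center}}$, $b_{\text{upper}}$ are always classified, so each unknown box loses a constant fraction of its measure per level. Your version is slightly sharper and more careful: summing all three diagonal boxes via convexity gives the fraction $3^{1-d}$ in general dimension $d$, whereas the paper keeps only the largest box and states $1/9$ for the two-dimensional case. You also make explicit three things the paper leaves implicit: that $b_{\operatorname{initial}}$ is bounded, that the pruning pair is clamped to the diagonal of $b$, and the per-generation box count that turns depth into a bound in terms of $N$.
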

See \refappendix{sec:proof:algcomplete} for the proof. In other words, all parameters outside a subset of measure zero are eventually classified as consistent or inconsistent; intuitively, the parameters that may never be classified are the ones along the decision boundary. This result holds since at any search depth, the fraction of the parameter space pruned can be lower-bounded.

\subsection{Computing Pruning Pairs via Quantitative Semantics}
\label{sec:quantitative}

The pruning algorithm depends on the ability to compute, given a box $b$, a pruning pair $(\theta^-, \theta^+)$ on the restriction of the parameter space to $b$. Recall that $\theta^+$ must be a boundary parameter for some $z^+ \in W^+$ and must satisfy $\den{Q_{\theta^+}}(z)=1$ for all other $z \in W^+$, and $\theta^-$ must be a boundary parameter for some $z^- \in W^-$, and must satisfy $\den{Q_{\theta^-}}(z)=0$ for all other $z \in W^-$.

Given a box $b=\boxlit{\theta_{\text{min}}}{\theta_{\text{max}}}$, our algorithm takes $L\subseteq\mathbb{R}^d$ to be the diagonal from $\theta_{\text{min}}$ to $\theta_{\text{max}}$ and computes the pruning pair along $L$. We can na\"{i}vely use binary search: for $\theta^+$, we search for the parameters where $\bigwedge_{z \in W^+} \den{Q_\theta}(z)$ transitions from $0$ to $1$, and similarly for $\theta^-$ and $\bigwedge_{z \in W^-} \neg \den{Q_\theta}(z)$.

Instead, by leveraging a quantitative semantics, we can directly compute $\theta^+$ and $\theta^-$, thereby reducing computation time substantially. Given a sketch $Q$, trajectory $z$, parameter $v \in \mathbb{R}^d$, and positive vector $u \in \mathbb{R}_{>0}^d$, we devise a quantitative semantics $\denq{Q}_{v,u}(z) \in \bar{\mathbb{R}}$ such that the parameter $\den{Q}_{v,u}(z) \cdot u + v$ is a boundary parameter. Intuitively, this semantics computes, starting at $v$, how many $u$-sized steps must be taken to reach the boundary. (For the uses in our algorithm, the number of steps is always in $[0,1]$.)
Then, for a box $b = \boxlit{\theta_{\text{min}}}{\theta_{\text{max}}}$, we can take $v = \theta_{\text{min}}$ and $u = \theta_{\text{max}}-\theta_{\text{min}}$, and compute
\begin{align*}
\theta^+ = \left(\min_{z \in W^+} \denq{Q}_{v,u}(z)\right) \cdot u + v,
\qquad
\theta^- = \left(\max_{z \in W^-} \denq{Q}_{v,u}(z)\right) \cdot u + v.
\end{align*}
We define the quantitative semantics in Figure~\ref{fig:quantsemantics}. The base case of $\varphi_{??}$ adjusts and rescales $\iota_\varphi$ by $v$ and $u$, and the other cases replace conjunction and disjunction in the satisfaction semantics with minimum and maximum. We have the following key result (where $\infty \cdot u \coloneqq \top$, $-\infty \cdot u \coloneqq \bot$, $\top + v \coloneqq \top$, and $\bot + v \coloneqq \bot$):
\begin{theorem}
\label{thm:quantcorrect}
For a sketch $Q$, trajectory $z$, parameter $v \in \mathbb{R}^d$, and positive vector $u \in \mathbb{R}_{>0}^d$, we have that $\denq{Q}_{v,u}(z) \cdot u + v$ is a boundary parameter of $z$ for $Q$.
\end{theorem}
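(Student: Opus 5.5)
The plan is to reduce everything to a one-dimensional statement along the ray $\theta(t) \coloneqq t\cdot u + v$ for $t\in\bar{\mathbb{R}}$. I will take the quantitative semantics of Figure~\ref{fig:quantsemantics} to have the following form:
\begin{itemize}
\item a parametric predicate $\varphi_{??}$ filling hole $k$ gets $(\iota_\varphi(z)-v_k)/u_k$;
\item a parameter-free predicate gets $+\infty$ if $\textsf{sat}_\varphi(z)=1$ and $-\infty$ otherwise;
\item conjunction is a $\min$;
\item sequencing is $\max_{k}\min\{\denq{Q_1}_{v,u}(z_{0:k}),\denq{Q_2}_{v,u}(z_{k:n})\}$, with iteration unfolded into sequencing.
\end{itemize}
The core claim, proved by structural induction on $Q$ (simultaneously for all subtrajectories of $z$), is
\begin{align*}
\den{Q_{\theta(t)}}(z)=1 \iff t\le \denq{Q}_{v,u}(z) \qquad\text{for all } t\in\mathbb{R}.
\end{align*}

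For the induction, the base case for $\varphi_{??}$ at hole $k$ is the equivalence $\iota_\varphi(z)\ge t u_k+v_k \iff t\le(\iota_\varphi(z)-v_k)/u_k$, which uses $u_k>0$. Boundedness of $\iota_\varphi$ keeps this value finite. A parameter-free predicate is constant in $t$, which matches the $\pm\infty$ convention. For conjunction, $t\le\min(a,b)$ iff $t\le a$ and $t\le b$. For sequencing, the disjunction over the finitely many split points $k$ turns into a $\max$ over $k$ of the conjunction, i.e.\ of a $\min$. This uses the fact that $t\le\max_k c_k$ iff $t\le c_k$ for some $k$, which holds because the max is over a finite set and hence attained. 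All subqueries share the same $\theta(t)$, since $Q_\theta$ fills each hole consistently. It is the non-strict inequality in $\mathbbm{1}(\iota_\varphi\ge\theta)$ that makes the satisfied set in $t$ closed, of the form $(-\infty,q]$.

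Next, writing $q=\denq{Q}_{v,u}(z)$, I do a case split to verify the definition of boundary parameter.
\begin{itemize}
\item If $q\in\mathbb{R}$, then $\theta(q)\in\mathbb{R}^d$ and the claim gives $\den{Q_{\theta(q)}}(z)=1$. Take any $\theta'\in\boxlit{\theta(q)}{\vec{\infty}}$, so $\theta'_i>\theta(q)_i$ for every $i$. Set $t=q+\min_i(\theta'_i-\theta(q)_i)/u_i>q$. Then $\theta(t)\le\theta'$ componentwise, and the claim gives $\den{Q_{\theta(t)}}(z)=0$. Lemma~\ref{lem:montonic} then yields $\den{Q_{\theta'}}(z)\le\den{Q_{\theta(t)}}(z)=0$.
\item If $q=+\infty$, then for any $\theta'\in\mathbb{R}^d$ choose $t$ large enough that $\theta'\le\theta(t)$. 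The claim gives satisfaction at $\theta(t)$, and monotonicity transfers it to $\theta'$. Hence $\top$ is a boundary parameter.
\item If $q=-\infty$, choose $t$ very negative so that $\theta(t)\le\theta'$. Unsatisfaction at $\theta(t)$ transfers upward to $\theta'$, giving $\bot$.
\end{itemize}

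The main obstacle is getting the one-dimensional claim exactly right: closedness at the endpoint, and the conventions for $\pm\infty$ (e.g.\ empty subtrajectories where $\iota=-\infty$, and $\min/\max$ with infinite values). I will handle it by treating values in $\bar{\mathbb{R}}$ and checking that "$t\le -\infty$ never holds" and "$t\le+\infty$ always holds" are consistent with each case of the induction. The passage from the line to the full box is then only the short monotonicity argument above.
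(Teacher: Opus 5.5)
Your proposal is correct and follows essentially the same route as the paper: restrict to the line $t \mapsto t\cdot u + v$, show by structural induction (a $\min$ for conjunction, a $\max$ of $\min$s for sequencing) that $\denq{Q}_{v,u}(z)$ is exactly the threshold in $t$ at which satisfaction flips, and then lift this to the boundary-parameter definition. Your iff formulation $\den{Q_{\theta(t)}}(z)=1 \iff t\le\denq{Q}_{v,u}(z)$ packages the paper's separate lemmas about boundary values of antitone functions under conjunction and disjunction. Your monotonicity argument for passing from the line to all of $\boxlit{\theta}{\vec{\infty}}$ (and to the $\top$/$\bot$ cases) fills in a step the paper only asserts with ``Note that''.
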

See \refappendix{sec:proof:quantcorrect} for the full proof. For intuition, consider $\theta_{\text{min}}=\vec{0}$ and $\theta_{\text{max}}=\vec{1}$ (i.e., the current box $b\subseteq\mathbb{R}^d$ is the unit hypercube). Then, $v=\vec{0}$ and $u=\vec{1}$, so $\denq{Q}_{v,u}$ reduces to the standard max-min quantitative semantics
for temporal logic~\cite{fainekos2009robustness}.

\begin{figure}[t]
    \centering
    \begin{align*}
    \denq{\varphi_{??i}}_{v,u}(z) &\coloneqq \frac{\iota_{\varphi}(z) - v_i}{u_i} \\
    \denq{\varphi}_{v,u}(z) &\coloneqq \begin{cases} \infty &\text{if } \textsf{sat}_{\varphi}(z) = 1 \\ -\infty &\text{if }\textsf{sat}_{\varphi}(z) = 0.\end{cases} \\
    \denq{Q_1 \land Q_2}_{v,u}(z) &\coloneqq \min\{\denq{Q_1}_{v,u}(z),\denq{Q_2}_{v,u}(z)\} \\
    \denq{Q_1 \seq Q_2}_{v,u}(z) &\coloneqq \max_{0 \leq k \leq n}\min\{\denq{Q_1}_{v,u}(z_{0:k}),\denq{Q_2}_{v,u}(z_{k:n})\}
    \end{align*}
    \caption{The quantitative semantics of our language, taking a sketch $Q$, trajectory $z$, parameter $v \in \mathbb{R}^d$, and positive vector $u \in \mathbb{R}_{>0}^d$. $n$ is the length of $z$.}
    \label{fig:quantsemantics}
\end{figure}

Now, if we consider the satisfaction semantics of a base predicate $\den{\varphi_{\theta_i}}=\mathbbm{1}(\iota_{\varphi}(z)\ge\theta_i)$, then the value of $\theta_i$ where the sementics flips is just $\iota_\varphi(z)$. So any parameter with $i$-th component $\iota_\varphi(z)$ is a boundary parameter, and since $L$ has the same slope in all dimensions, the boundary parameter along $L$ is $\iota_\varphi(z) \cdot \vec{1} + \vec{0} = \denq{\varphi_{??i}}_{\vec{0},\vec{1}}(z) \cdot \vec{1} + \vec{0}$.

In the inductive cases, it suffices to show that we can replace conjunction and disjunction with minimum and maximum in the semantics. Since the satisfaction semantics is monotonically decreasing, as we move upward along $L$, at some point we will transition from $1$ to $0$. A conjunction becomes $0$ when either conjunct becomes $0$, so the transition will occur when we hit the first of the conjuncts' transition points (their minimum). Dually, a disjunction becomes $0$ when both disjuncts become $0$, so we will transition at the last of the disjuncts' transition points (their maximum).

Finally, the intuition behind $u$ and $v$ is that they ``preprocess'' the parameters so that we evaluate along the diagonal of the box $\boxlit{v}{v + u}$ instead of $\boxlit{\vec{0}}{\vec{1}}$.

\subsection{Implementation}
\label{sec:impl}
We implement our approach in a system called \toolname. It begins by running Algorithm~\ref{alg:main} on a small number of labeled examples.

\para{Active learning.}
With a small number of examples, there are typically many queries that are consistent with the labels, and yet which disagree on the labels of the remaining data. To disambiguate, we use an active learning strategy, asking the user to label specific trajectories that we choose, which are then added to our set of labeled examples. Queries that are not consistent with the new label are discarded. The labeling process continues until the set of consistent queries agrees on the labels of all unlabeled data.

When choosing the trajectory $z^*$ to query the user for next, we select the one on which the set of consistent queries $C$ disagrees most---i.e.,
$$z^* = \operatorname*{\arg\min}_{z\in Z}\left|J(z) - \frac{1}{2}\right|,$$
where
$$J(z)\coloneqq|C|^{-1}\sum_{Q_{\theta}\in C}\mathbbm{1}\left(\psi_{(z,y)}(Q_{\theta})\right)$$ is the fraction of consistent queries that predict a positive label for trajectory $z$. 

\para{Search implementation.}
In some cases, searching for consistent parameters may take a very long time. To improve performance, we impose a timeout: for each sketch, we pause search if either: (i) we find some consistent box of parameters or (ii) we've exceeded 25 steps. In both cases, we save the sets of consistent, inconsistent, and unknown boxes. At each step of active learning, the newly labeled example may render previously consistent parameters inconsistent, so we mark all consistent boxes as unknown. We then resume search, again until (i) we find some consistent box (which may be the same one we had before), or (ii) we again exceed 25 steps.

Note that while this timeout may cause us to query the user more often than is strictly necessary, it does not affect either the soundness or completeness of our approach, as we continue search after querying the user.

\para{Complete query selection.}
Active learning and evaluation of $F_1$ scores (in Section~\ref{sec:exp}) both require complete queries with specific parameters, rather than sketches with boxes of parameters. Since the set $C$ of consistent queries is infinitely large, we instead we use one query for each sketch that is known to have consistent parameters (sketches where search timed-out are thus not included). For those sketches, we pick the middle of the box of known-consistent parameters.

\section{Evaluation}
\label{sec:exp}
We demonstrate how our approach can be used to synthesize queries to solve interesting tasks: in particular, we show that (i) given just a few initial examples, it can synthesize queries that achieve good performance on a held-out test set, and (ii) our optimizations significantly reduce the synthesis time.

\subsection{Experimental Setup} \label{sec:setup}
\para{Datasets.} We evaluate on two datasets of object trajectories: YTStreams~\cite{bastani2020miris}, consisting of video and extracted object trajectories from fixed-position traffic cameras, and MABe22~\cite{mabe22}, consisting of trajectories of up to three mice interacting in a laboratory setting. We also evaluate on a synthetic maritime surveillance task from the STL synthesis literature~\cite{kong2017}. 
On YTStreams, we use two traffic cameras, one in Tokyo and one in Warsaw, and we consider single cars or pairs of cars. On MABe22, we consider pairs of mice. For the predicates used, see Table~\ref{tab:predicates:yts} for YTStreams, \refappendixtab{tab:predicates:mabe} for MABe22, and \refappendixtab{tab:predicates:ms} for maritime surveillance.

\begin{table*}[t]
    \centering
    \scriptsize
    \caption{The predicates used for the YTStreams dataset.}
    \begin{tabular}{>{\centering\arraybackslash}p{0.2\textwidth}p{0.75\textwidth}}
    \toprule
    {\bf Predicate} & \multicolumn{1}{c}{{\bf Description}} \\
    \midrule
    $\textsf{InLaneK}(A)$ & Whether, at every time-step in the interval, object A is sufficiently close to the annotated curve for lane K and A’s movement direction is sufficiently in line with the curve for K. \vspace{5pt} \\
    $\textsf{DurationNotShort}$ & Whether the interval spans at least 5 seconds. \vspace{5pt} \\
    $\textsf{AvgAccelGt}_{\theta}$ & Whether the average acceleration over the interval is at least $\theta$. \vspace{5pt} \\
    $\textsf{DistanceLt}_{\theta}$ & Whether, at every time-step in the interval, the distance between the two objects is less than $\theta$. \vspace{5pt} \\
    $\textsf{SpeedRatioGt}_{\theta}(A,B)$ & Whether, at every time-step in the interval, the speed of A divided by the speed of B is at least $\theta$. \vspace{5pt} \\
    $\textsf{DispLt}_{\theta}(A)$ & Whether the distance between the position in the first frame of the interval and the position in the last frame is less than $\theta$. \\
    \bottomrule
    \end{tabular}
    \label{tab:predicates:yts}
\end{table*}

\para{Tasks.}
On YTStreams, we manually wrote $5$ ground truth queries. Several queries apply to multiple configurations (e.g., different pairs of lanes), resulting in $10$ queries total (tasks H-Q in Table~\ref{tab:queries:yts}). The real-valued parameters were chosen manually, by visually examining whether they were selecting the desired trajectories. These queries cover a wide range of behaviors; for instance, they can capture behaviors such as human drivers making unprotected turns, an important challenge for autonomous cars~\cite{sadigh2016planning}, as well as cars trying to pass~\cite{schmerling2018multimodal}. We show examples of trajectories selected by three of our multi-object queries in Figure~\ref{fig:trajectories}. MABe22 describes $9$ queries for scientifically interesting mouse behavior. We implemented the $6$ most complex to use as ground truth queries (tasks A-F in \refappendixtab{tab:queries:mabe}). The maritime surveillance task has trajectory labels and so does not need a ground truth query (task G).

\begin{table*}[t]
    \scriptsize
    \centering
    \caption{Ground-truth queries for the YTStreams dataset. ``IDs'' indicates which tasks are instances of a given query. Multiple instantiations correspond to different lanes being used for ``lane 1'' and ``lane 2''. The first is a one-object Shibuya query, the second is a one-object Warsaw query, and the rest are two-object Warsaw queries.}
    \begin{tabular}{>{\centering\arraybackslash}p{0.15\textwidth}>{}p{0.83\textwidth}}
    \toprule
    \textbf{IDs} & \textbf{Query} \\
    \midrule
    H, I, J, K &
    Matches cars that turn, starting in lane 1 and ending in lane 2.

    \vspace{0.5em}
    \centerline{$\predbase{\textsf{InLane1}(A)} \seq \predbase{\textsf{Any}} \seq \predbase{\textsf{InLane2}(A)}$}
    \\
    L, M &
    Matches cars that accelerate for a significant period of time while in lane 1.

    \vspace{0.5em}
    \centerline{$\predbase{\textsf{InLane1}(A)} \land \predbase{\textsf{AvgAccelGt}(A)}_{??} \land \predbase{\textsf{DurationNotShort}}$} \\
    N &
    Matches pairs of cars where car B is in lane 2 for the entire duration of A turning from lane 1 into lane 2.

    \vspace{0.5em}
    \centerline{$\left(\predbase{\textsf{InLane1}(A)} \seq \predbase{\textsf{Any}} \seq \predbase{\textsf{InLane2}(A)}\right) \land \predbase{\textsf{InLane2}(B)}$} \\
    O, P &
    Matches pairs of cars in parallel lanes, 1 and 2, where car A is going faster than car B for a significant period of time.

    \vspace{0.5em}
    \centerline{$\predbase{\textsf{InLane1}(A)} \land \predbase{\textsf{InLane2}(B)} \land \predbase{\textsf{DurationNotShort}} \land \predbase{\textsf{SpeedFactorGt}(A,B)}_{??}$} \\
    Q &
    Matches pairs of cars in parallel lanes, 1 and 2, where the cars are close for a significant period of time. 

    \vspace{0.5em}
    \centerline{$\predbase{\textsf{InLane1}(A)} \land \predbase{\textsf{InLane2}(B)} \land \predbase{\textsf{DurationNotShort}} \land \predbase{\textsf{DistanceLt}(A,B)}_{??}$} \\
    \bottomrule
    \end{tabular}
    \label{tab:queries:yts}
\end{table*}

\begin{figure*}[t]
    \centering
    \includegraphics[width=0.32\textwidth]{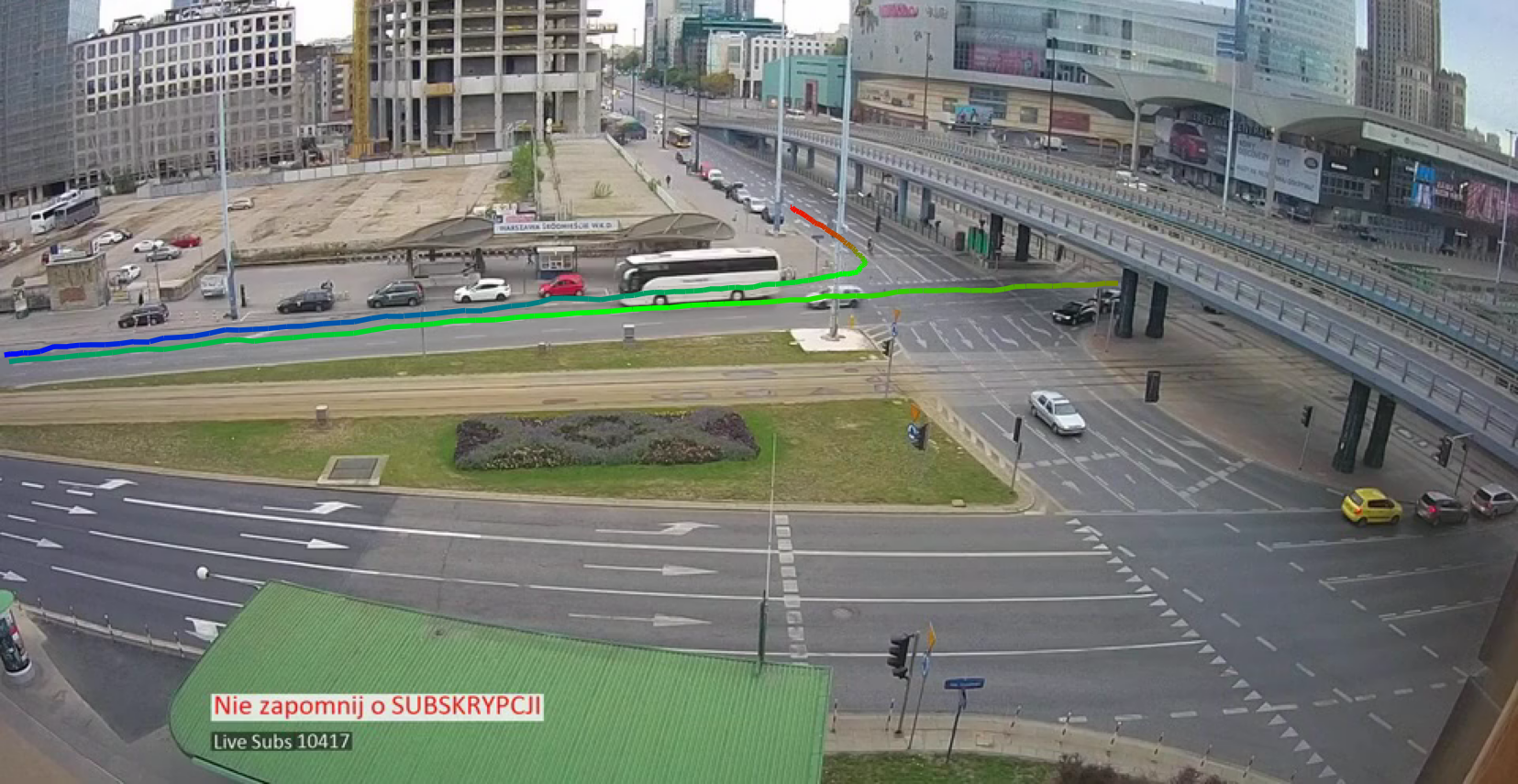}
    \includegraphics[width=0.32\textwidth]{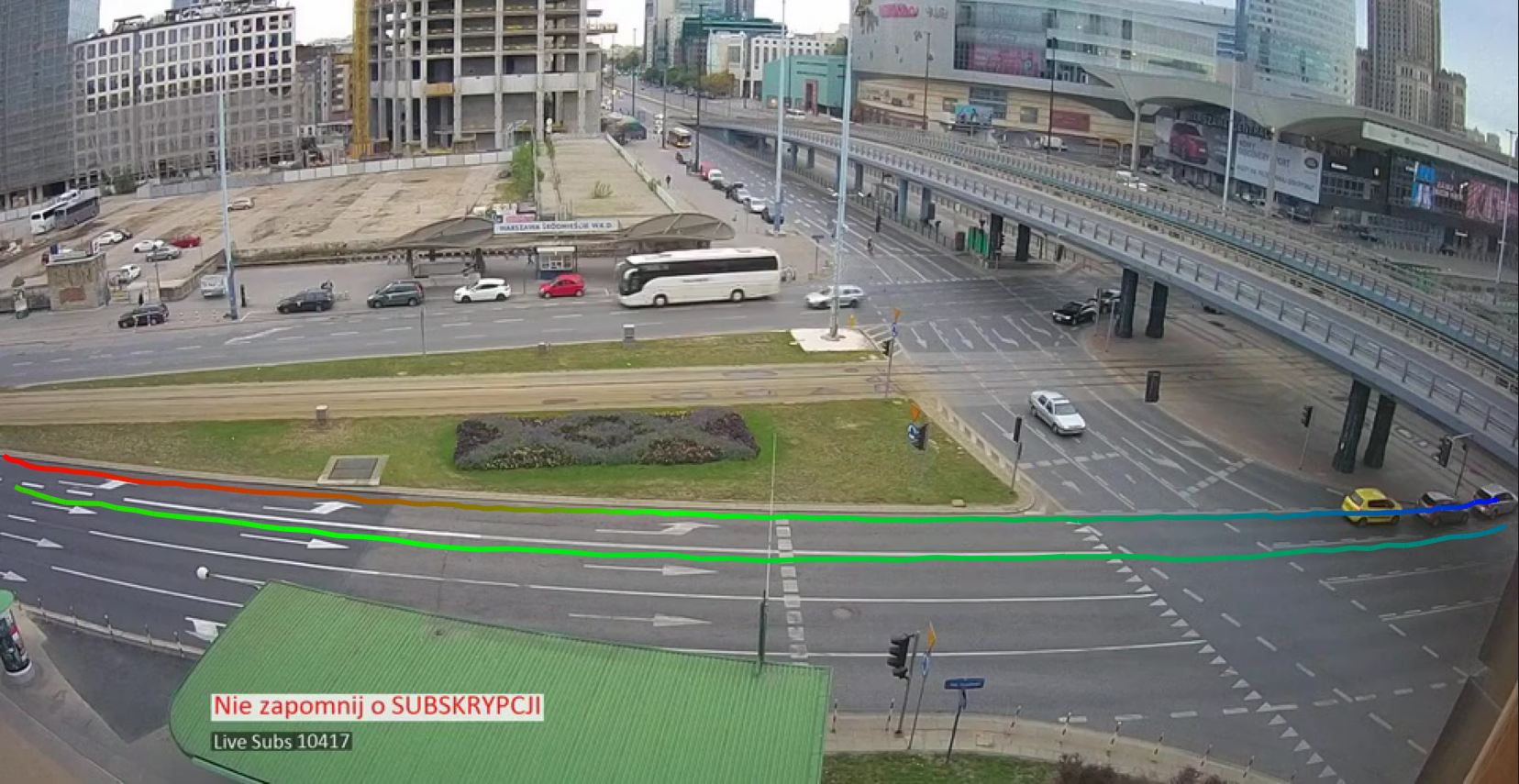}
    \includegraphics[width=0.32\textwidth]{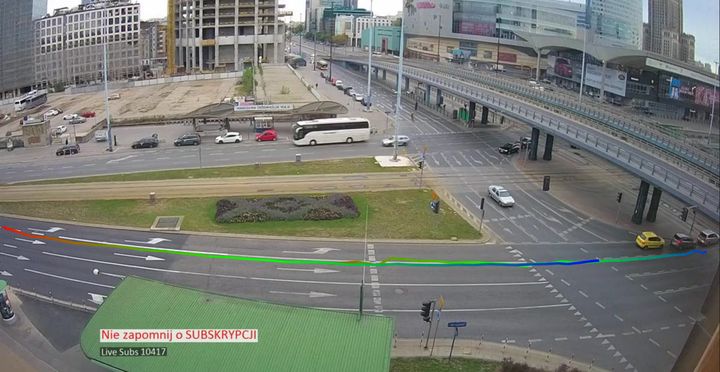}
    \caption{Trajectories selected by multi-object queries. Each image shows two objects; the color of each one changes from red to green to denote the progression of time. Left: Unprotected right turn into lane with oncoming traffic. Middle: Bottom car drives faster than the top one and passes it. Right: One car driving closely behind the other.}
    \label{fig:trajectories}
\end{figure*}

\para{Synthesis.}
For each task, we divide the set $Z$ of all trajectories into a train set $Z_\text{train}$ and a test set $Z_\text{test}$, using trajectories in the first half of the video for training, and those in the second half for testing. We randomly sample a set of initial labeled examples $W$ from $Z_{\text{train}}$, with 2 samples being positive and 10 being negative, and then actively label 25 additional examples from $Z_{\operatorname{train}}$. For YTStreams and MABe22, labels are from the ground truth query.
For tractability, we limit search to sketches with at most three predicates, at most two of which may have parameters. In most cases, this excludes the ground truth from the search space.

\subsection{Accuracy of Synthesized Queries} \label{sec:accuracy}

We show that \toolname synthesizes accurate queries from just a few labeled examples.
We evaluate the $F_1$ score of the synthesized queries on $Z_{\text{test}}$. Recall that our algorithm returns a list $C$ of consistent queries; we report the median $F_1$ score across $Q\in C$.

\para{Baselines.}
We compare to (i) an ablation where we replace our active learning strategy with an approach that labels $z$ uniformly at random from the remaining unlabeled training examples; (ii) an LSTM~\cite{hochreiter1997long,argoverse} neural network; and (iii) a transformer neural network~\cite{vaswani2017,giuliari2020transformer,franco2023transformer}. Because neural networks perform poorly on such small datasets, we pretrain the LSTM on an auxiliary task, namely, trajectory forecasting~\cite{kitani2012activity}.
Then, we freeze the hidden representation of the learned LSTM, and use these as features to train a logistic regression model on our labeled examples. The neural network baselines do active learning by selecting among the unlabeled trajectories the one with the highest predicted probability of being positive.

\para{Results.}
We show the $F_1$ score of each of the 17 queries in Table~\ref{tab:al_f1} after 0, 5, 10, and 25 steps of active learning. After just 10 steps, our approach provides $F_1$ score above 0.99 on 10 of 17 queries, and after 25 steps, it yields an $F_1$ score above 0.9 on all but 2 queries.
Thus, \toolname is able to synthesize accurate queries with relatively little user input.
The neural networks achieve poor performance, particularly on the more difficult queries.

\begin{table*}[t]
\centering
\notsotiny
\caption{$F_1$ score after $n$ steps of active learning, with our algorithm for selecting tracks to label (``$Q$''), an active learning ablation (``$R$''), an LSTM (``$L$''), and a transformer (``$T$''). For $Q$ and $R$, there may be many queries consistent with the labeled data, so the median $F_1$ score is reported. Bold indicates best score at a given number of steps.}
\begin{tabular}{crrrrrrrrrrrrrrrr}
\toprule \multicolumn{1}{c}{\multirow{2}{*}{\textbf{ID}}} & \multicolumn{4}{c}{\textbf{0 Steps}} & \multicolumn{4}{c}{\textbf{5 Steps}} & \multicolumn{4}{c}{\textbf{10 Steps}} & \multicolumn{4}{c}{\textbf{25 Steps}} \\
& \multicolumn{1}{c}{$Q$} & \multicolumn{1}{c}{$R$} & \multicolumn{1}{c}{$L$} & \multicolumn{1}{c}{$T$} & \multicolumn{1}{c}{$Q$} & \multicolumn{1}{c}{$R$} & \multicolumn{1}{c}{$L$} & \multicolumn{1}{c}{$T$} & \multicolumn{1}{c}{$Q$} & \multicolumn{1}{c}{$R$} & \multicolumn{1}{c}{$L$} & \multicolumn{1}{c}{$T$} & \multicolumn{1}{c}{$Q$} & \multicolumn{1}{c}{$R$} & \multicolumn{1}{c}{$L$} & \multicolumn{1}{c}{$T$}\\
\midrule
A & \hspace{0.01in} 0.69 & \hspace{0.01in} 0.69 & \hspace{0.01in} \bf{1.00} & \hspace{0.01in} 0.74 & \hspace{0.01in} 1.00 & \hspace{0.01in} \bf{1.00} & \hspace{0.01in} 1.00 & \hspace{0.01in} 0.74 & \hspace{0.01in} \bf{1.00} & \hspace{0.01in} 1.00 & \hspace{0.01in} 1.00 & \hspace{0.01in} 0.74 & \hspace{0.01in} \bf{1.00} & \hspace{0.01in} 1.00 & \hspace{0.01in} 1.00 & \hspace{0.01in} 0.74 \\
B & \hspace{0.01in} \bf{0.99} & \hspace{0.01in} \bf{0.99} & \hspace{0.01in} 0.47 & \hspace{0.01in} 0.20 & \hspace{0.01in} \bf{0.99} & \hspace{0.01in} \bf{0.99} & \hspace{0.01in} 0.47 & \hspace{0.01in} 0.25 & \hspace{0.01in} \bf{0.99} & \hspace{0.01in} \bf{0.99} & \hspace{0.01in} 0.47 & \hspace{0.01in} 0.05 & \hspace{0.01in} \bf{0.99} & \hspace{0.01in} 0.98 & \hspace{0.01in} 0.47 & \hspace{0.01in} 0.06 \\
C & \hspace{0.01in} \bf{0.96} & \hspace{0.01in} \bf{0.96} & \hspace{0.01in} 0.38 & \hspace{0.01in} 0.09 & \hspace{0.01in} \bf{0.99} & \hspace{0.01in} 0.96 & \hspace{0.01in} 0.38 & \hspace{0.01in} 0.08 & \hspace{0.01in} \bf{0.99} & \hspace{0.01in} 0.96 & \hspace{0.01in} 0.38 & \hspace{0.01in} 0.02 & \hspace{0.01in} \bf{0.99} & \hspace{0.01in} 0.98 & \hspace{0.01in} 0.38 & \hspace{0.01in} 0.01 \\
D & \hspace{0.01in} \bf{0.77} & \hspace{0.01in} \bf{0.77} & \hspace{0.01in} 0.52 & \hspace{0.01in} 0.27 & \hspace{0.01in} \bf{0.99} & \hspace{0.01in} 0.96 & \hspace{0.01in} 0.52 & \hspace{0.01in} 0.28 & \hspace{0.01in} \bf{0.99} & \hspace{0.01in} \bf{0.99} & \hspace{0.01in} 0.52 & \hspace{0.01in} 0.32 & \hspace{0.01in} 0.99 & \hspace{0.01in} \bf{1.00} & \hspace{0.01in} 0.52 & \hspace{0.01in} 0.08 \\
E & \hspace{0.01in} \bf{1.00} & \hspace{0.01in} \bf{1.00} & \hspace{0.01in} 0.44 & \hspace{0.01in} 0.29 & \hspace{0.01in} \bf{1.00} & \hspace{0.01in} \bf{1.00} & \hspace{0.01in} 0.44 & \hspace{0.01in} 0.14 & \hspace{0.01in} \bf{1.00} & \hspace{0.01in} \bf{1.00} & \hspace{0.01in} 0.44 & \hspace{0.01in} 0.13 & \hspace{0.01in} \bf{1.00} & \hspace{0.01in} \bf{1.00} & \hspace{0.01in} 0.44 & \hspace{0.01in} 0.07 \\
F & \hspace{0.01in} \bf{0.88} & \hspace{0.01in} \bf{0.88} & \hspace{0.01in} 0.78 & \hspace{0.01in} 0.38 & \hspace{0.01in} \bf{0.99} & \hspace{0.01in} 0.96 & \hspace{0.01in} 0.78 & \hspace{0.01in} 0.39 & \hspace{0.01in} \bf{1.00} & \hspace{0.01in} 0.96 & \hspace{0.01in} 0.78 & \hspace{0.01in} 0.18 & \hspace{0.01in} \bf{1.00} & \hspace{0.01in} 0.96 & \hspace{0.01in} 0.78 & \hspace{0.01in} 0.27 \\
G & \hspace{0.01in} 0.68 & \hspace{0.01in} 0.68 & \hspace{0.01in} 0.65 & \hspace{0.01in} \bf{0.78} & \hspace{0.01in} \bf{1.00} & \hspace{0.01in} \bf{1.00} & \hspace{0.01in} 0.65 & \hspace{0.01in} 0.77 & \hspace{0.01in} \bf{1.00} & \hspace{0.01in} \bf{1.00} & \hspace{0.01in} 0.65 & \hspace{0.01in} 0.77 & \hspace{0.01in} \bf{1.00} & \hspace{0.01in} \bf{1.00} & \hspace{0.01in} 0.65 & \hspace{0.01in} 0.77 \\
H & \hspace{0.01in} \bf{0.30} & \hspace{0.01in} \bf{0.30} & \hspace{0.01in} 0.12 & \hspace{0.01in} 0.22 & \hspace{0.01in} 0.34 & \hspace{0.01in} \bf{0.34} & \hspace{0.01in} 0.13 & \hspace{0.01in} 0.23 & \hspace{0.01in} \bf{0.92} & \hspace{0.01in} \bf{0.92} & \hspace{0.01in} 0.13 & \hspace{0.01in} 0.22 & \hspace{0.01in} \bf{0.92} & \hspace{0.01in} \bf{0.92} & \hspace{0.01in} 0.13 & \hspace{0.01in} 0.37 \\
I & \hspace{0.01in} \bf{0.37} & \hspace{0.01in} \bf{0.37} & \hspace{0.01in} 0.13 & \hspace{0.01in} 0.00 & \hspace{0.01in} \bf{1.00} & \hspace{0.01in} 0.37 & \hspace{0.01in} 0.13 & \hspace{0.01in} 0.00 & \hspace{0.01in} \bf{1.00} & \hspace{0.01in} \bf{1.00} & \hspace{0.01in} 0.13 & \hspace{0.01in} 0.00 & \hspace{0.01in} \bf{1.00} & \hspace{0.01in} \bf{1.00} & \hspace{0.01in} 0.13 & \hspace{0.01in} 0.31 \\
J & \hspace{0.01in} 0.07 & \hspace{0.01in} 0.07 & \hspace{0.01in} 0.01 & \hspace{0.01in} \bf{1.00} & \hspace{0.01in} 0.41 & \hspace{0.01in} 0.07 & \hspace{0.01in} 0.01 & \hspace{0.01in} \bf{0.86} & \hspace{0.01in} \bf{0.80} & \hspace{0.01in} 0.09 & \hspace{0.01in} 0.04 & \hspace{0.01in} 0.75 & \hspace{0.01in} 0.80 & \hspace{0.01in} 0.09 & \hspace{0.01in} 0.04 & \hspace{0.01in} \bf{0.86} \\
K & \hspace{0.01in} \bf{0.28} & \hspace{0.01in} \bf{0.28} & \hspace{0.01in} 0.15 & \hspace{0.01in} 0.00 & \hspace{0.01in} \bf{0.99} & \hspace{0.01in} 0.27 & \hspace{0.01in} 0.15 & \hspace{0.01in} 0.00 & \hspace{0.01in} \bf{0.99} & \hspace{0.01in} \bf{0.99} & \hspace{0.01in} 0.15 & \hspace{0.01in} 0.00 & \hspace{0.01in} \bf{0.99} & \hspace{0.01in} \bf{0.99} & \hspace{0.01in} 0.15 & \hspace{0.01in} 0.00 \\
L & \hspace{0.01in} \bf{0.67} & \hspace{0.01in} \bf{0.67} & \hspace{0.01in} 0.07 & \hspace{0.01in} 0.37 & \hspace{0.01in} \bf{0.96} & \hspace{0.01in} 0.88 & \hspace{0.01in} 0.07 & \hspace{0.01in} 0.42 & \hspace{0.01in} \bf{0.96} & \hspace{0.01in} 0.88 & \hspace{0.01in} 0.07 & \hspace{0.01in} 0.08 & \hspace{0.01in} \bf{0.96} & \hspace{0.01in} 0.88 & \hspace{0.01in} 0.07 & \hspace{0.01in} 0.31 \\
M & \hspace{0.01in} \bf{0.92} & \hspace{0.01in} \bf{0.92} & \hspace{0.01in} 0.10 & \hspace{0.01in} 0.37 & \hspace{0.01in} \bf{0.99} & \hspace{0.01in} 0.92 & \hspace{0.01in} 0.10 & \hspace{0.01in} 0.46 & \hspace{0.01in} \bf{0.99} & \hspace{0.01in} 0.92 & \hspace{0.01in} 0.10 & \hspace{0.01in} 0.00 & \hspace{0.01in} \bf{0.99} & \hspace{0.01in} 0.92 & \hspace{0.01in} 0.10 & \hspace{0.01in} 0.18 \\
N & \hspace{0.01in} \bf{0.60} & \hspace{0.01in} \bf{0.60} & \hspace{0.01in} 0.02 & \hspace{0.01in} 0.00 & \hspace{0.01in} \bf{0.20} & \hspace{0.01in} 0.09 & \hspace{0.01in} 0.02 & \hspace{0.01in} 0.00 & \hspace{0.01in} 0.11 & \hspace{0.01in} \bf{0.21} & \hspace{0.01in} 0.02 & \hspace{0.01in} 0.00 & \hspace{0.01in} 0.18 & \hspace{0.01in} \bf{0.78} & \hspace{0.01in} 0.02 & \hspace{0.01in} 0.31 \\
O & \hspace{0.01in} \bf{0.11} & \hspace{0.01in} \bf{0.11} & \hspace{0.01in} 0.01 & \hspace{0.01in} 0.04 & \hspace{0.01in} \bf{0.50} & \hspace{0.01in} 0.17 & \hspace{0.01in} 0.01 & \hspace{0.01in} 0.21 & \hspace{0.01in} \bf{0.70} & \hspace{0.01in} 0.17 & \hspace{0.01in} 0.01 & \hspace{0.01in} 0.21 & \hspace{0.01in} \bf{1.00} & \hspace{0.01in} 0.21 & \hspace{0.01in} 0.01 & \hspace{0.01in} 0.00 \\
P & \hspace{0.01in} \bf{0.16} & \hspace{0.01in} \bf{0.16} & \hspace{0.01in} 0.04 & \hspace{0.01in} 0.04 & \hspace{0.01in} \bf{0.23} & \hspace{0.01in} 0.21 & \hspace{0.01in} 0.03 & \hspace{0.01in} 0.04 & \hspace{0.01in} \bf{0.82} & \hspace{0.01in} 0.21 & \hspace{0.01in} 0.03 & \hspace{0.01in} 0.14 & \hspace{0.01in} \bf{1.00} & \hspace{0.01in} 0.29 & \hspace{0.01in} 0.03 & \hspace{0.01in} 0.14 \\
Q & \hspace{0.01in} \bf{0.07} & \hspace{0.01in} \bf{0.07} & \hspace{0.01in} 0.02 & \hspace{0.01in} 0.02 & \hspace{0.01in} 0.16 & \hspace{0.01in} 0.12 & \hspace{0.01in} 0.01 & \hspace{0.01in} \bf{0.31} & \hspace{0.01in} \bf{0.92} & \hspace{0.01in} 0.12 & \hspace{0.01in} 0.01 & \hspace{0.01in} 0.18 & \hspace{0.01in} \bf{1.00} & \hspace{0.01in} 0.12 & \hspace{0.01in} 0.01 & \hspace{0.01in} 0.20 \\
\bottomrule
\end{tabular}
\label{tab:al_f1}
\end{table*}

\subsection{Synthesis Running Time}
\label{sec:runningtime}
\begin{table}[t]
\centering
\notsotiny
\caption{Running time (seconds) of synthesis (mean $\pm$ standard error) using binary search ($B$) and quantitative semantics ($Q$) running on CPU and GPU, with 25 steps of active learning.}
\begin{tabular}{lrrrrrrrr}
\toprule
\multicolumn{1}{c}{\multirow{2}{*}{\textbf{ID}}} & \multicolumn{4}{c}{\textbf{CPU}} & \multicolumn{4}{c}{\textbf{GPU}} \\
& \multicolumn{2}{c}{$B$} &
\multicolumn{2}{c}{$Q$} &
\multicolumn{2}{c}{$B$} &
\multicolumn{2}{c}{$Q$} \\
\midrule
A & $8,460$\hspace{0.03in} $\pm$ & $1,517$ \hspace{0.10in} & $3,343$\hspace{0.03in} $\pm$ & $202$ \hspace{0.10in} & $737$\hspace{0.03in} $\pm$ & $36$ \hspace{0.10in} & $174$\hspace{0.03in} $\pm$ & $14$ \hspace{0.10in}\\
B & $3,511$\hspace{0.03in} $\pm$ & $549$ \hspace{0.10in} & $2,291$\hspace{0.03in} $\pm$ & $237$ \hspace{0.10in} & $428$\hspace{0.03in} $\pm$ & $37$ \hspace{0.10in} & $110$\hspace{0.03in} $\pm$ & $9$ \hspace{0.10in}\\
C & $3,319$\hspace{0.03in} $\pm$ & $505$ \hspace{0.10in} & $2,007$\hspace{0.03in} $\pm$ & $359$ \hspace{0.10in} & $376$\hspace{0.03in} $\pm$ & $6$ \hspace{0.10in} & $113$\hspace{0.03in} $\pm$ & $9$ \hspace{0.10in}\\
D & $2,728$\hspace{0.03in} $\pm$ & $476$ \hspace{0.10in} & $2,714$\hspace{0.03in} $\pm$ & $334$ \hspace{0.10in} & $370$\hspace{0.03in} $\pm$ & $8$ \hspace{0.10in} & $119$\hspace{0.03in} $\pm$ & $2$ \hspace{0.10in}\\
E & $1,264$\hspace{0.03in} $\pm$ & $176$ \hspace{0.10in} & $599$\hspace{0.03in} $\pm$ & $54$ \hspace{0.10in} & $225$\hspace{0.03in} $\pm$ & $3$ \hspace{0.10in} & $50$\hspace{0.03in} $\pm$ & $1$ \hspace{0.10in}\\
F & $1,689$\hspace{0.03in} $\pm$ & $360$ \hspace{0.10in} & $748$\hspace{0.03in} $\pm$ & $81$ \hspace{0.10in} & $285$\hspace{0.03in} $\pm$ & $7$ \hspace{0.10in} & $60$\hspace{0.03in} $\pm$ & $1$ \hspace{0.10in}\\
G & $661$\hspace{0.03in} $\pm$ & $141$ \hspace{0.10in} & $133$\hspace{0.03in} $\pm$ & $23$ \hspace{0.10in} & $219$\hspace{0.03in} $\pm$ & $77$ \hspace{0.10in} & $30$\hspace{0.03in} $\pm$ & $1$ \hspace{0.10in}\\
H & $399$\hspace{0.03in} $\pm$ & $70$ \hspace{0.10in} & $147$\hspace{0.03in} $\pm$ & $9$ \hspace{0.10in} & $185$\hspace{0.03in} $\pm$ & $94$ \hspace{0.10in} & $32$\hspace{0.03in} $\pm$ & $17$ \hspace{0.10in}\\
I & $400$\hspace{0.03in} $\pm$ & $74$ \hspace{0.10in} & $84$\hspace{0.03in} $\pm$ & $13$ \hspace{0.10in} & $163$\hspace{0.03in} $\pm$ & $85$ \hspace{0.10in} & $23$\hspace{0.03in} $\pm$ & $12$ \hspace{0.10in}\\
J & $544$\hspace{0.03in} $\pm$ & $120$ \hspace{0.10in} & $173$\hspace{0.03in} $\pm$ & $5$ \hspace{0.10in} & $227$\hspace{0.03in} $\pm$ & $121$ \hspace{0.10in} & $36$\hspace{0.03in} $\pm$ & $19$ \hspace{0.10in}\\
K & $493$\hspace{0.03in} $\pm$ & $77$ \hspace{0.10in} & $125$\hspace{0.03in} $\pm$ & $25$ \hspace{0.10in} & $163$\hspace{0.03in} $\pm$ & $83$ \hspace{0.10in} & $30$\hspace{0.03in} $\pm$ & $16$ \hspace{0.10in}\\
L & $732$\hspace{0.03in} $\pm$ & $47$ \hspace{0.10in} & $286$\hspace{0.03in} $\pm$ & $73$ \hspace{0.10in} & $252$\hspace{0.03in} $\pm$ & $133$ \hspace{0.10in} & $57$\hspace{0.03in} $\pm$ & $29$ \hspace{0.10in}\\
M & $697$\hspace{0.03in} $\pm$ & $40$ \hspace{0.10in} & $253$\hspace{0.03in} $\pm$ & $49$ \hspace{0.10in} & $245$\hspace{0.03in} $\pm$ & $128$ \hspace{0.10in} & $56$\hspace{0.03in} $\pm$ & $30$ \hspace{0.10in}\\
N & $5,691$\hspace{0.03in} $\pm$ & $272$ \hspace{0.10in} & $977$\hspace{0.03in} $\pm$ & $176$ \hspace{0.10in} & $1,393$\hspace{0.03in} $\pm$ & $590$ \hspace{0.10in} & $264$\hspace{0.03in} $\pm$ & $136$ \hspace{0.10in}\\
O & $8,306$\hspace{0.03in} $\pm$ & $521$ \hspace{0.10in} & $2,314$\hspace{0.03in} $\pm$ & $476$ \hspace{0.10in} & $811$\hspace{0.03in} $\pm$ & $12$ \hspace{0.10in} & $127$\hspace{0.03in} $\pm$ & $2$ \hspace{0.10in}\\
P & $11,326$\hspace{0.03in} $\pm$ & $673$ \hspace{0.10in} & $4,198$\hspace{0.03in} $\pm$ & $1,333$ \hspace{0.10in} & $970$\hspace{0.03in} $\pm$ & $60$ \hspace{0.10in} & $167$\hspace{0.03in} $\pm$ & $8$ \hspace{0.10in}\\
Q & $12,430$\hspace{0.03in} $\pm$ & $962$ \hspace{0.10in} & $2,915$\hspace{0.03in} $\pm$ & $508$ \hspace{0.10in} & $1,141$\hspace{0.03in} $\pm$ & $101$ \hspace{0.10in} & $183$\hspace{0.03in} $\pm$ & $11$ \hspace{0.10in}\\
\bottomrule
\end{tabular}
\label{tab:runtime}
\end{table}
Next, we show that quantitative pruning and using a GPU each significantly reduce synthesis time, evaluating total running time for 25 steps of active learning.

\para{Ablations.}
We compare to two ablations: (i) using the binary search approach of~\cite{efficientenumerative} to find pruning pairs, rather than using our quantitative semantics, and (ii) evaluating the matrix semantics (\refappendix{sec:matrix}) on a CPU rather than a GPU.

\para{Results.}
In Table~\ref{tab:runtime}, we report the running time of our algorithms on a CPU ($2\times$ AMD EPYC 7402 24-Core) and a GPU ($1\times$ NVIDIA RTX A6000). For binary search, on average, the GPU is $7.6\times$ faster than the CPU. On a GPU, using the quantitative semantics rather than binary search offers another $5.0\times$ speed-up.

\section{Related Work}
\label{sec:related}
\para{Monotonicity for parameter pruning.}
We build on~\cite{learningmonotone} for our parameter pruning algorithm. Their approach has been applied to synthesizing STL formulas for sequence classification by first enumerating sketches and then using monotonicity to find parameters, similar to our binary search baseline~\cite{efficientenumerative}. We replace binary search with our novel strategy based on quantitative semantics, leading to
$5.0 \times$ speedup.
There is also work building on~\cite{learningmonotone} to create logically-relevant distance metrics between trajectories by taking the Hausdorff distance between parameter satisfaction regions (which they call ``validity domains''), with applications to clustering~\cite{monotoniclogical}. For logics like STL, our quantitative semantics could provide a speedup to their approach.

\para{Synthesis of temporal logic formulas.}
More broadly, there has been work synthesizing parameters in a variant of STL by discretizing the parameter space and then walking the satisfaction boundary~\cite{synthptstl}; in one dimension, their approach becomes binary search, inheriting its shortcomings.
There has been work on synthesizing STL formulas that are satisfied by a closed-loop control model~\cite{miningstl}, but they assume the ability to find counterexample traces for incorrect STL formulas, which is not applicable to our setting. 
Another approach is to synthesize parameters in STL formulas
using gradient-based optimization~\cite{jha2019} or stochastic optimization~\cite{kong2014}, but we found these methods to be ineffective in our setting, and they do not come with either soundness or completeness guarantees.
There is work using decision trees to synthesize STL formulas~\cite{kong2017,linard2020,bombara2021,aasi2022}, but these operate on a restricted subset of STL, namely Boolean combinations of a fixed set of template formulas. This restriction prevents these approaches from synthesizing temporal structure, which is a key component of the queries in our domains. Finally, there has been work on active learning of STL formulae using decision trees~\cite{linard2020}, but it assumes the ability to query for equivalence between a particular STL formula and the ground truth, which is not possible in our setting.

\para{Synthesizing constants.} There is work on synthesizing parameters of programs using counterexampled-guided inductive synthesis and different theory solvers, including Fourier–Motzkin variable elimination and an SMT solver~\cite{alessandro2018cegis}. Though our synthesis objective can be encoded in the theory of linear arithmetic, it is extremely large, and we have found such solvers to be ineffective in practice.

\para{Querying video data.}
There has been recent work on querying object detections and trajectories in video data~\cite{kang2019blazeit,fu2019rekall,kang2020jointly,kang2020model,bastani2020miris,bastani2020vaas,moll2020exsample,bastaniskyquery}.
The main difference is our focus on synthesis; in addition, these approaches focus on SQL-like operators such as select, inner-join, group-by, etc., over predefined predicates, which cannot capture compositions such as the sequencing and iteration operators in our language, which are necessary for identifying more complex behaviors.

\para{Neurosymbolic models.}
There has been recent work on leveraging program synthesis in the context of machine learning. For instance, there has been work on using programs to represent high-level structure in images~\cite{ellis2015unsupervised,ellis2018learning,tian2019learning,young2019learning,ellis2019write}, for reinforcement learning~\cite{verma2018programmatically,bastani2018verifiable,inala2019synthesizing,verma2019imitation}, and for querying websites~\cite{chen2021web}; in contrast, we use programs to classify trajectories. The most closely related work is on synthesizing functional programs operating over lists~\cite{valkov2018houdini,shah2020learning}. Our language includes key constructs not included in their languages. Most importantly, we include sequencing; in their functional language, such an operator would need to be represented as a nested series of if-then-else operators. In addition, their language does not support predicates that match subsequences; while such a predicate could be added, none of their operators can compose such predicates.

\para{Quantitative synthesis.}
There has been work on program synthesis with quantitative properties---e.g., on synthesis for producing optimized code~\cite{schkufza2013stochastic,phothilimthana2016scaling,jia2019optimizing}, for approximate computing~\cite{misailovic2014chisel,bornholt2016optimizing}, for probabilistic programming~\cite{nori2015efficient}, and for embedded control~\cite{chaudhuri2014bridging}. These approaches largely focus on search-based synthesis, either using constraint optimization~\cite{misailovic2014chisel}, continuous optimization~\cite{chaudhuri2014bridging}, enumerative search~\cite{bornholt2016optimizing,phothilimthana2016scaling},
or stochastic search~\cite{schkufza2013stochastic,nori2015efficient,jia2019optimizing}. While we leverage ideas from this literature, our quantitative semantics based pruning strategy is novel.

\para{Quantitative semantics.}
Our quantitative semantics is similar to the ``robustness degree''~\cite{fainekos2009robustness} of a temporal logic formula. The difference is that, by adjusting the denotations of the base predicates, our quantitative semantics gives a parameter on the satisfaction boundary.
More broadly, there has been work on quantitative semantics for temporal logic for robust constraint satisfaction~\cite{fainekos2009robustness,tabuada2016robust,deshmukh2017robust}, and to guide reinforcement learning~\cite{jothimurugan2019composable}. There has been work on quantitative regular expressions (QREs)~\cite{alur2017derivatives}, though in general, QREs cannot be efficiently evaluated due to their nondeterminism, and our language is restricted to ensure efficient computation. There has been work on synthesizing QREs for network traffic classification~\cite{shi2021}, using binary search to compute decision thresholds. Similarly, there has been work using the Viterbi semiring to obtain quantitative semantics for Datalog programs~\cite{si2019synthesizing}, which they use in conjunction with gradient descent to learn the rules of the Datalog program. In contrast, we use our quantitative semantics to efficiently prune the parameter search space in a provably correct way. Finally, there has been work on using GPUs to evaluate regular expressions~\cite{naghmouchi2010small}; however, they focus on regular expressions over strings.

\para{Query languages.} Our language is closely related to both signal temporal logic (STL)~\cite{stl} and Kleene algebras with tests (KAT)~\cite{kozen1997kleene}. In particular, it can straightforwardly be extended to subsume both (see \refappendix{sec:langexts} for details), and our pruning strategy applies to this extended language. The addition of Kleene star, required to subsume KAT, worsens the evaluation time. STL has been used to monitor safety requirements for autonomous vehicles~\cite{hekmatnejad2019}. Spatio-Temporal Perception Logic (SPTL) is an extension of STL to support spatial reasoning~\cite{hekmatnejad2022formalizing}. Many of its operators are monotone, and thus would benefit from our algorithm. Scenic~\cite{fremont2019scenic} is a DSL for creating static and dynamic driving scenes, but its focus is on generating scenes rather than querying for behaviors.

\section{Conclusion}

We have proposed a novel framework called \toolname for synthesizing queries over video trajectory data.
Our language is similar to KAT and STL, but supports conjunction and sequencing over multi-step predicates.
Given only a few examples, \toolname efficiently synthesizes trajectory queries consistent with those examples.
A key contribution of our approach is the use of a quantitative semantics to prune the parameter search space, yielding a $5.0 \times$ speedup over the state-of-the-art.
In our evaluation, we demonstrate that \toolname effectively synthesizes queries to identify interesting driving behaviors, and that our optimizations dramatically reduce synthesis time.

\subsubsection*{Acknowledgements}
We thank the anonymous reviewers for their helpful feedback. This work was supported in part by NSF Award CCF-1910769, NSF Award CCF-1917852, and ARO Award W911NF-20-1-0080.

\subsubsection*{Data Availability Statement}
An artifact is available~\cite{artifact} with our implementation, which reproduces our experimental results (Table~\ref{tab:al_f1} and Table~\ref{tab:runtime}) and may be useful for performing synthesis on other trajectory datasets or implementing our algorithm for other DSLs.

%% file: appendix.tex
\section{Query Language Extensions}
\label{sec:langexts}

\subsection{Evaluating Quantitative Semantics with Matrices}
\label{sec:matrix}
The sequencing operator requires considering arbitrary splits of a trajectory, which can be accomplished with dynamic programming. However, this evaluation is highly data-parallel, something GPUs excel at. Because matrix operations are a natural fit for GPU programming, we devise a matrix semantics for our language, achieving a large additional speedup.

Instead of just computing the quantitative semantics for a trajectory, the idea is to simultaneously evaluate the semantics for every subsequence, i.e. $\denmat{Q}_{v,u}(z)$ is an $(n+1) \times (n+1)$ matrix where the $i,j$-th entry gives the quantitative semantics for the subsequence from $i$ to $j$---i.e. we define the semantics such that Theorem~\ref{thm:matrixsound} below holds.

\begin{figure}[t]
\begin{align*}
\denmat{\varphi_{??}}(z) &\coloneqq
\begin{bmatrix}
\iota_{\varphi}(z_{0:0}) & \iota_{\varphi}(z_{0:1}) & \cdots & \iota_{\varphi}(z_{0:n}) \\
-\infty & \iota_{\varphi}(z_{1:1}) & \cdots & \iota_{\varphi}(z_{1:n}) \\
\vdots & \vdots & \ddots & \vdots \\
-\infty & -\infty & \cdots & \iota_{\varphi}(z_{n:n}) \\
\end{bmatrix} \\
\denmat{Q_1 \land Q_2}(z) &\coloneqq \min\{\denmat{Q_1}(z), \denmat{Q_2}(z)\} \\
\denmat{Q_1 \seq Q_2}(z) &\coloneqq \denmat{Q_1}(z) \cdot \denmat{Q_2}(z)
\end{align*}
\caption{The matrix semantics of our language; $z\in\mathcal{Z}$ is a trajectory of length $n$; $\min$ is elementwise; $\cdot$ is the matrix product with $\max$ instead of addition and $\min$ instead of multiplication. We omit the case $\denmat{\varphi_{\theta}}(z)$, which is the same as $\denmat{\varphi_{??}}(z)$, except $\iota_{\varphi}(z_{i:j})$ is replaced with $\denq{\varphi}(z_{i:j})$.}
\label{fig:matrixsemantics}
\end{figure}

We define the matrix semantics in Figure~\ref{fig:matrixsemantics}. The base case $\varphi_{??}$ evaluates $\iota_\varphi$ on every subsequence, conjunction is element-wise minimum, and sequencing is matrix multiplication in the max-min semiring. Na\"{i}vely, evaluating this with standard matrix multiplication techniques would take $O(n^3)$ time and memory. However, since we are ultimately interested in $$\denq{Q}_{v,u}(z) = \big(\denmat{Q}_{v,u}(z)\big)_{0,n} = \begin{bmatrix}\infty, -\infty,\ldots,-\infty\end{bmatrix} \cdot \big(\denmat{Q}_{v,u}(z)\big) \cdot \begin{bmatrix}-\infty,\ldots,-\infty,\infty\end{bmatrix}^T$$ we can reassociate the matrix-matrix products to be matrix-vector products. This uses only $O(n^2)$ time and memory, the same as dynamic programming.

\begin{lemma}
\label{lem:uppertriangular}
For any $Q$, vector $v \in \mathbb{R}^d$, positive vector $u \in \mathbb{R}_{>0}^d$, and trajectory $z$ of length $n$, $\denmat{Q}_{v,u}(z)$ is upper-triangular---i.e., $(\denmat{Q}_{v,u}(z))_{i,j}=-\infty$ for all $i>j$.
\end{lemma}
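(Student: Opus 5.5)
We argue by structural induction on the query $Q$, following the clauses of the matrix semantics in Figure~\ref{fig:matrixsemantics}. Throughout, we treat $Q^k$ as $k$-fold sequencing, as the paper does, so only three cases are needed: base predicates (with or without a parameter hole), conjunction, and sequencing.

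\emph{Base case.} For $\varphi_{??}$, the matrix is written explicitly: every entry with $i>j$ is $-\infty$, and entries with $i\le j$ are $\iota_\varphi(z_{i:j})$. The matrix is therefore upper-triangular by construction. The same holds for $\varphi_\theta$ and for parameter-free $\varphi$, since the caption says they differ only in replacing the entries $\iota_\varphi(z_{i:j})$ on or above the diagonal by $\denq{\varphi}(z_{i:j})$. If the $v,u$ rescaling is applied to these entries, it only affects entries with $i\le j$, so the below-diagonal entries stay $-\infty$.

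\emph{Conjunction.} Suppose $\denmat{Q_1}$ and $\denmat{Q_2}$ are upper-triangular. For $i>j$, the entry of $\denmat{Q_1\land Q_2}$ is $\min\{-\infty,-\infty\}=-\infty$. In fact one zero entry suffices, because $\min\{a,-\infty\}=-\infty$ for every $a\in\bar{\mathbb{R}}$.

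\emph{Sequencing.} This is the only step needing a short argument. In the max-min semiring,
\begin{align*}
(\denmat{Q_1}\cdot\denmat{Q_2})_{i,j}=\max_{0\le k\le n}\min\{(\denmat{Q_1})_{i,k},(\denmat{Q_2})_{k,j}\}.
\end{align*}
Fix $i>j$ and any $k$. Either $k<i$ or $k\ge i>j$. If $k<i$, then $(\denmat{Q_1})_{i,k}=-\infty$ by the induction hypothesis. If $k>j$, then $(\denmat{Q_2})_{k,j}=-\infty$ by the induction hypothesis. In either case the inner minimum is $-\infty$, so the maximum over $k$ is $-\infty$. This is the usual fact that upper-triangular matrices are closed under products, with $-\infty$ acting as the semiring zero (the identity for $\max$ and absorbing for $\min$).

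I expect no real obstacle. The only points needing care are verifying that $-\infty$ is the semiring zero, and noting that the iteration $Q^k$ and the parameterized or unparameterized base predicates are covered by the cases above.
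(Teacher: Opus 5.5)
Your proposal is correct and follows the same approach as the paper: structural induction on $Q$, with the base cases upper-triangular by construction and both elementwise $\min$ and the max-min matrix product preserving the $-\infty$ entries below the diagonal. Your explicit case split on $k$ in the sequencing step spells out what the paper's one-line proof leaves implicit.
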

\begin{proof}
By induction on $Q$. The base cases are upper-triangular by definition, and both pointwise minimum and max-min matrix multiplication preserve upper-triangularity.
\end{proof}

\begin{theorem}
\label{thm:matrixsound}
For any query $Q$, vector $v \in \mathbb{R}^d$, positive vector $u \in \mathbb{R}_{>0}^d$, trajectory $z$ of length $n$, and indices $i \leq j$,  $(\denmat{Q}_{v,u}(z))_{i, j} = \denq{Q}_{v,u}(z_{i:j})$. In particular, $\denq{Q}_{v,u}(z) = (\denmat{Q}_{v,u}(z))_{0, n}$.
\end{theorem}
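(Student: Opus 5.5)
We argue by structural induction on $Q$, proving the entrywise claim $(\denmat{Q}_{v,u}(z))_{i,j} = \denq{Q}_{v,u}(z_{i:j})$ for all $0 \le i \le j \le n$ simultaneously. The ``in particular'' then follows by taking $i=0$, $j=n$, since $z_{0:n}=z$. Throughout, Lemma~\ref{lem:uppertriangular} supplies the fact that entries strictly below the diagonal are $-\infty$. Iteration $Q^k$ is handled as repeated sequencing and needs no separate case.

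\emph{Base cases.} For a parameter hole $\varphi_{??i}$, we read the matrix entries in Figure~\ref{fig:matrixsemantics} as carrying the rescaling by $v$ and $u$. That is, entry $(i,j)$ for $i\le j$ is $(\iota_\varphi(z_{i:j})-v_i)/u_i$, which is exactly $\denq{\varphi_{??i}}_{v,u}(z_{i:j})$ by Figure~\ref{fig:quantsemantics}. For a parameter-free or filled predicate $\varphi$, the caption states that entry $(i,j)$ is $\denq{\varphi}(z_{i:j})$, so equality holds by definition.

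\emph{Conjunction.} The matrix semantics takes an elementwise minimum. Entry $(i,j)$ is therefore $\min\{(\denmat{Q_1})_{i,j},(\denmat{Q_2})_{i,j}\}$. By the induction hypothesis this equals $\min\{\denq{Q_1}(z_{i:j}),\denq{Q_2}(z_{i:j})\}$, which is $\denq{Q_1\land Q_2}(z_{i:j})$.

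\emph{Sequencing.} This is the main step. By the max-min matrix product,
\begin{align*}
(\denmat{Q_1\seq Q_2})_{i,j} = \max_{0\le k\le n}\min\{(\denmat{Q_1})_{i,k},(\denmat{Q_2})_{k,j}\}.
\end{align*}
We split the range of $k$ into two parts.
\begin{itemize}
\item If $k<i$ or $k>j$, then by Lemma~\ref{lem:uppertriangular} one of the two factors is $-\infty$. The inner minimum is then $-\infty$, which is the identity for $\max$, so these terms can be dropped.
\item If $i\le k\le j$, both indices are ordered correctly and the induction hypothesis applies to both factors. The term becomes $\min\{\denq{Q_1}(z_{i:k}),\denq{Q_2}(z_{k:j})\}$.
\end{itemize}
It remains to match this with the definition of $\denq{Q_1\seq Q_2}(z_{i:j})$. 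Writing $w=z_{i:j}$, which has length $m=j-i$, the definition is $\max_{0\le k'\le m}\min\{\denq{Q_1}(w_{0:k'}),\denq{Q_2}(w_{k':m})\}$. Reindex with $k=i+k'$ and use $w_{0:k'}=z_{i:k}$ and $w_{k':m}=z_{k:j}$. The two expressions then agree term by term.

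The only care points are bookkeeping. One must check that $-\infty$ really is absorbing for $\min$ and neutral for $\max$ in $\bar{\mathbb{R}}$, including when the other argument is $\pm\infty$. One must also check that the diagonal $i=j$ (empty subtrajectories) is treated consistently on both sides.
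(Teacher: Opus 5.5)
Your proof is correct and takes essentially the same route as the paper's. Both argue by structural induction on $Q$: the base cases hold by definition, conjunction goes through the elementwise minimum, and sequencing uses upper-triangularity to discard the terms with $k<i$ or $k>j$ from the max-min product, then applies the induction hypothesis and matches the definition of $\denq{Q_1\seq Q_2}_{v,u}(z_{i:j})$. You are also slightly more careful than the original on three points: you make the $v,u$ rescaling in the base case explicit, you spell out the reindexing $k=i+k'$, and you use the full index range $0\le k\le n$ where the paper's display writes $1\le k\le n$.
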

\begin{proof}
We prove by structural induction on $Q$. The base cases $Q=\varphi$ and $Q=\varphi_{??}$ follow by the definitions of $\denq{\varphi_{??}}(z_{i:j})$ and $(\denmat{\varphi_{??}}(z))_{i,j}$.

For the case $Q=Q_1\seq Q_2$, we have
\begin{align*}
\denmat{Q_1\seq Q_2}_{v,u}(z)_{i,j} &=\max_{1 \leq k \leq n}\min\{\denmat{Q_1}_{v,u}(z)_{i,k},\denmat{Q_2}_{v,u}(z)_{k,j}\} \\
\intertext{but since $\denmat{Q_1}_{v,u}(z)$ and $\denmat{Q_2}_{v,u}(z)$ are upper-triangular, we have}
&=\max_{i \leq k \leq j}\min\{\denmat{Q_1}_{v,u}(z)_{i,k},\denmat{Q_2}_{v,u}(z)_{k,j}\} \\
\intertext{and by the inductive hypothesis we have}
&=\max_{k\in\{i,i+1,...,j\}}\min\{\denq{Q_1}_{v,u}(z_{i:k}),\denq{Q_2}_{v,u}(z_{k:j})\} \\
&=\denq{Q_1\seq Q_2}_{v,u}(z_{i:j}).
\end{align*}

Finally, for the case $Q=Q_1\wedge Q_2$, we have
\begin{align*}
\denmat{Q_1\wedge Q_2}_{v,u}(z)_{i,j}
&=\min\{\denmat{Q_1}_{v,u}(z)_{i,j},\denmat{Q_2}_{v,u}(z)_{i,j}\} \\
&=\min\{\denq{Q_1}_{v,u}(z_{i:j}),\denq{Q_2}_{v,u}(z_{i:j})\} \\
&=\denq{Q_1\wedge Q_2}_{v,u}(z_{i:j}).
\end{align*}
\end{proof}

\subsection{Additional Operators}
\label{sec:addops}
We can add several additional operators to our language, including disjunction and Kleene star, which lets us subsume regular expressions, and negation and another operator, which lets us subsume signal temporal logic.
\begin{align*}
Q ::= \ldots \mid Q \lor Q \mid Q^* \mid \neg Q \mid Q^{\dashv [a,b]}
\end{align*}

The semantics of disjunction is straightfoward, being the same as conjunction but with $\max$ rather than $\min$.
\begin{align*}
\denmat{Q_1 \lor Q_2}_{v,u}(z) &\coloneqq \max\{\denmat{Q_1}_{v,u}(z),\denmat{Q_2}_{v,u}(z)\}
\end{align*}

Kleene star is a disjunction over repeated sequencings. The evaluation of Kleene star has a worse asymptotic complexity as a function of $n$ than the other operators.
\begin{align*}
\denmat{Q^*}_{v,u}(z) &\coloneqq \max_{0 \leq k \leq n} (\denmat{Q}_{v,u}(z))^k
\end{align*}

For negation, note that
\begin{align*}
\neg \den{\varphi_{\theta}}(z) = \neg \mathbbm{1}(\iota_{\varphi}(z)\ge\theta) &= \mathbbm{1}(\iota_{\varphi}(z) < \theta) = \mathbbm{1}(-\iota_{\varphi}(z) > -\theta)
\end{align*}
Thus if we introduce, for every atomic formula $\varphi$, a negated version $\bar{\varphi}$ where $\iota_{\bar{\varphi}}(z) \coloneqq -\iota_{\varphi}(z)$, then we have
\begin{align*}
\mathbbm{1}(-\iota_{\varphi}(z) > -\theta) &\approx \mathbbm{1}(-\iota_{\varphi}(z) \geq -\theta) = \den{\bar{\varphi}_{-\theta}}(z).
\end{align*}
Here we ignore technicalities regarding points lying on the satisfaction region boundary, and whether they are part of the satisfaction region itself (i.e. when $\iota_{\varphi}(z) = \theta$). We define $\bar{Q}$ to be the same formula as $Q$, except with each primitive $\varphi$ replaced with $\bar{\varphi}$. Then we can define the semantics for negation as:
\begin{align*}
\left(\denmat{\neg Q}_{v,u}(z)\right)_{i,j} &\coloneqq \begin{cases}
        -\left(\denmat{\bar{Q}}_{v,u}(z)\right)_{i,j} & i \leq j \\
        -\infty & \\
    \end{cases}
\end{align*}

The operator ${(\cdot)}^{\dashv [a,b]}$ exists to support the ``until'' operator of STL. To define its semantics, we use two helper functions. The first takes a vector, representing the value of a signal at each step in time, and transforms it into a matrix, representing the minimum value of the signal for each interval with between $a$ and $b$ time-steps:
\begin{align*}
    (\cdot)^{[a,b]} &: \mathbb{R}^n \to \mathbb{R}^{n \times n} \\
    (v^{[a,b]})_{i, j} &\coloneqq \begin{cases}
        \min\limits_{i \leq k \leq j} v_k & a \leq j - i \leq b\\
        -\infty & \\
    \end{cases}
\end{align*}
The second takes a matrix of values for each time interval and returns a vector of, for each time step $i$, the value of the interval from $i$ to the end of the sequence:
\begin{align*}
    (\cdot)^\dashv &: \mathbb{R}^{n \times n} \to \mathbb{R}^{n} \\
    (M^\dashv)_{i} &\coloneqq M_{i, n}
\end{align*}
Now we can define the operator's semantics:
\begin{align*}
\denmat{Q^{\dashv [a,b]}}_{v,u}(z) &\coloneqq (\denmat{Q}_{v,u}(z)^\dashv)^{[a,b]}
\end{align*}

Finally, we can define the translation $\transstl{\cdot}$ from STL formulas. It is straightforward, except the ``until'' operator, which decomposes into sequencing and the $(\cdot)^{\dashv [a,b]}$ operator:
\begin{align*}
\transstl{\varphi} &\coloneqq \varphi \\
\transstl{\neg Q_1} &\coloneqq {\neg \transstl{Q}} \\
\transstl{Q_1 \land Q_2} &\coloneqq \transstl{Q_1} \land \transstl{Q_2} \\
\transstl{Q_1 \mathbf{U}_{[a,b]} Q_2} &\coloneqq \transstl{Q_1}^{\dashv [a,b]} \seq \transstl{Q_2}
\end{align*}

\section{Predicates and Ground-Truth Queries for Datasets}
\label{sec:predandquery}
On all datasets we implicitly include $\predbase{\textsf{Any}}$ and $\predbase{\textsf{None}}$ predicates, which match all trajectories of any length and no trajectories, respectively.

We have three kinds of queries in the YTStreams dataset: one-object queries in the Shibuya video, one-object queries in the Warsaw video, and two-object queries in the Warsaw video. The number of lanes varies between Shibuya and Warsaw, and for one-object queries, certain predicates are not applicable. Including these multiplicities, Shibuya has $9$ predicates, one-object Warsaw has $10$, and two-object Warsaw has $19$. Including multiplicities, MABe22 has $16$ predicates. The maritime surveillance task has $8$ predicates.

The way queries are evaluated varies slightly between the maritime surveillance task and other tasks. In the maritime surveillance task, a trajectory $z$ is considered to match a query $Q$ if $\den{Q}(z) = 1$. The other tasks are drawn from segments of video, and thus may not represent whole trajectories (a trajectory in our dataset may appear in the middle of the road rather than entering from the side). In these cases, we consider a trajectory positive if there is some subset trajectory that matches, so $\den{Q}(z_{i:j}) = 1$ for some $i \leq j$. This is equivalent to sequencing the beginning and end of $Q$ with $\predbase{\textsf{Any}}$, so checking that $\den{\predbase{\textsf{Any}} \seq Q \seq \predbase{\textsf{Any}}}(z) = 1$.

\begin{table*}[t]
\centering
\scriptsize
\caption{The predicates used for the MABe22 dataset.}
\begin{tabular}{>{\centering\arraybackslash}p{0.25\textwidth}p{0.75\textwidth}}
\toprule
{\bf Predicate} & \multicolumn{1}{c}{{\bf Description}} \\
\midrule
$\textsf{DistanceLt}_{\theta}$ & Whether, at every time-step in the interval, the distance between the two objects is less than $\theta$. \vspace{5pt} \\
$\textsf{DistanceGt}_{\theta}$ & Whether, at every time-step in the interval, the distance between the two objects is greater than $\theta$. \vspace{5pt} \\
$\textsf{Running}(A)$ & Whether, at every time-step in the interval, the speed of mouse A was at least 15 cm/sec. \vspace{5pt} \\
$\textsf{MovingSlowly}(A)$ & Whether, at every time-step in the interval, the speed of mouse A was at most 2 cm/sec. \vspace{5pt} \\
$\textsf{DurationShort}$ & Whether the interval spans at most 5 seconds. \vspace{5pt} \\
$\textsf{DurationNotShort}$ & Whether the interval spans at least 5 seconds. \vspace{5pt} \\
$\textsf{DurationLong}$ & Whether the interval spans at least 10 seconds. \vspace{5pt} \\
$\textsf{NoseNoseContact(A,B)}$ & Whether, at every time-step in the interval, the distance between mouse A’s nose and mouse B’s nose is at most 1.5 cm. \vspace{5pt} \\
$\textsf{NoseGenitalContact(A,B)}$ & Whether, at every time-step in the interval, the distance between mouse A’s nose and mouse B’s tail base is at most 1.5 cm. \vspace{5pt} \\
$\textsf{NoseEarContact(A,B)}$ & Whether, at every time-step in the interval, the distance between mouse A’s nose and one of mouse B’s ears is at most 1.5 cm. \vspace{5pt} \\
$\textsf{MovingToward(A,B)}$ & Whether, at every time-step in the interval, the angle of A’s movement is close to the angle between A and B. \vspace{5pt} \\
$\textsf{GazingAt(A,B)}$ & Whether, at every time-step in the interval, the A’s head is pointing toward B. \vspace{5pt} \\
\bottomrule
\end{tabular}
\label{tab:predicates:mabe}
\end{table*}

\begin{table*}[tp]
    \centering
    \scriptsize
    \caption{The predicates used for the maritime surveillance task.}
    \begin{tabular}{>{\centering\arraybackslash}p{0.2\textwidth}p{0.75\textwidth}}
    \toprule
    {\bf Predicate} & \multicolumn{1}{c}{{\bf Description}} \\
    \midrule
    $\textsf{DurationGt}_{\theta}$ & Whether the interval spans at least $\theta$ seconds. \\
    $\textsf{DurationLt}_{\theta}$ & Whether the interval spans at most $\theta$ seconds. \\
    $\textsf{XPosGt}_{\theta}$ & Whether, at every time-step in the interval, the X position is at least $\theta$. \\
    $\textsf{XPosLt}_{\theta}$ & Whether, at every time-step in the interval, the X position is at most $\theta$. \\
    $\textsf{YPosGt}_{\theta}$ & Whether, at every time-step in the interval, the Y position is at least $\theta$. \\
    $\textsf{YPosLt}_{\theta}$ & Whether, at every time-step in the interval, the Y position is at most $\theta$. \\
    \bottomrule
    \end{tabular}
    \label{tab:predicates:ms}
\end{table*}
    
\begin{table*}[tp]
\scriptsize
\centering
\caption{Ground-truth queries for the MABe22 tasks. ``IDs'' indicates which queries evaluated in other tables are instances of these general queries. These are based on the behavior annotations from the MABe22 dataset~\cite{mabe22}. (See Table 2 in their Appendix B.1.)}
\begin{tabular}{>{\centering\arraybackslash}p{0.1\textwidth}>{}p{0.88\textwidth}}
\toprule
\textbf{IDs} & \textbf{Query} \\
\midrule
A &
\textbf{Approach}: Mice move from at least 5 cm apart to less than 1 cm apart at closest point, over a period of at least 10 seconds at a maximum speed of 2 cm/sec.

\vspace{0.5em}
\centerline{$\predbase{\textsf{DistanceGt}}_{??} \seq \left(\predbase{\textsf{MoveSlowly}}(A) \land \predbase{\textsf{DurationLong}} \right) \seq \predbase{\textsf{DistanceLt}}_{??}$} \\
B &
\textbf{Oral-oral contact}: Noses of mice are less than 1.5 cm apart. Must occur less than 5 seconds after an approach.

\vspace{0.5em}
\centerline{$\predbase{\textsf{DistanceGt}}_{??} \seq \left(\predbase{\textsf{MoveSlowly}}(A) \land \predbase{\textsf{DurationLong}} \right) \seq \qquad\ $} \centerline{$\qquad\  \predbase{\textsf{DistanceLt}}_{??} \seq \predbase{\textsf{DurationShort}} \seq \predbase{\textsf{NoseNoseContact}(A,B)}$} \\
C &
\textbf{Oral-genital contact}: Nose and tail base of mice are less than 1.5 cm apart. Must occur less than 5 seconds after an approach.

\vspace{0.5em}
\centerline{$\predbase{\textsf{DistanceGt}}_{??} \seq \left(\predbase{\textsf{MoveSlowly}}(A) \land \predbase{\textsf{DurationLong}} \right) \seq \qquad\quad$} \centerline{$\qquad\quad \predbase{\textsf{DistanceLt}}_{??} \seq \predbase{\textsf{DurationShort}} \seq \predbase{\textsf{NoseGenitalContact}(A,B)}$} \\
D &
\textbf{Oral-ear contact}: Nose and ear of mice are less than 1.5 cm apart. Must occur less than 5 seconds after an approach.

\vspace{0.5em}
\centerline{$\predbase{\textsf{DistanceGt}}_{??} \seq \left(\predbase{\textsf{MoveSlowly}}(A) \land \predbase{\textsf{DurationLong}} \right) \seq \qquad\ $} \centerline{$\qquad\  \predbase{\textsf{DistanceLt}}_{??} \seq \predbase{\textsf{DurationShort}} \seq \predbase{\textsf{NoseEarContact}(A,B)}$} \\
E &
\textbf{Chase}: Mice are moving above 15 cm/sec, with closest points less than 5 cm apart, and angular deviation between mice is less than 30 degrees.

\vspace{0.5em}
\centerline{$\predbase{\textsf{Running}(A)} \land \predbase{\textsf{Running}(B)} \land \predbase{\textsf{DistanceLt}}_{??} \land \predbase{\textsf{MovingToward}(A,B)}$} \\
F &
\textbf{Watching}: Mice are more than 5 cm apart by less than 20 cm apart, and gaze offset of one mouse is less than 15 degrees from body of other mouse, for a minimum duration of 3 seconds.

\vspace{0.5em}
\centerline{$\predbase{\textsf{DistanceLt}}_{??} \land \predbase{\textsf{DistanceGt}}_{??} \land \predbase{\textsf{GazingAt}(A,B)} \land \predbase{\textsf{DurationNotShort}}$} \\
\bottomrule
\end{tabular}
\label{tab:queries:mabe}
\end{table*}

See Table~\ref{tab:predicates:yts} for the YTStreams predicates and Table~\ref{tab:queries:yts} for the YTStreams ground-truth queries. See Table~\ref{tab:predicates:mabe} for the MABe22 predicates and Table~\ref{tab:queries:mabe} for the MABe22 ground-truth queries.  See Table~\ref{tab:predicates:ms} for the maritime surveillance predicates.

\section{Neural Network Baselines}
\label{sec:nnbaselines}
Input features vary depending on the dataset: for one-object Shibuya, one-object Warsaw, and maritime surveillance, states consist of $x$ position, $y$ position, and time (state dimension $3$). For two-object Warsaw, states consist of the $x$ and $y$ positions of each object, time, and a mask $1$ or $0$ for each object to indicate its presence in that state. For MABe22, there are two mice, each of which has 12 tracked keypoints, each of which is an $x$, $y$ pair.

During active learning, we label the unlabeled trajectory with the highest predicted probability of being positive.
\subsubsection{Transformer}
We train a transformer model to directly predict class labels using the cross-entropy loss. We adapted the trajectory-forecasting model from~\cite{giuliari2020transformer} but modify the final layer to predict class logits rather than future positions.

\subsubsection{LSTM}
First we train a decoder-only LSTM (adapting the code from Argoverse v1~\cite{argoverse}) on the task of predicting the next state of each trajectory of a dataset. The LSTM  embedding size is twice the input size, and the hidden state size is four times the input size. We train it for 1000 epochs. The trajectory embeddings used for logistic regression are the final hidden states from the LSTM.

\section{Proofs}

\subsection{Proof of Theorem~\ref{thm:algcomplete}}
\label{sec:proof:algcomplete}
Because we pop boxes in first-in-first-out order, we do a breadth-first search. Thus as $N\to\infty$, also the depth in the search tree goes to infinity. So it suffices to show that, at each additional depth, we prune at least a constant fraction of the space. In particular, we prune at least $1/9$ of the remaining space at each additional depth, and $(1 - 1/9)^d \to 0$ as $d\to\infty$.

When doing pruning on a box $b$, regardless of whether the pruning pair for $b$ is consistent or inconsistent, we always prune $b_{\operatorname{lower}}$, $b_{\operatorname{upper}}$, and $b_{\operatorname{center}}$. These three boxes lie along the centerline of $b$, and moreover their centerlines are a partition of the centerline of $b$. So at least one of them, call it $b'$, contains $\geq 1/3$ of the centerline of $b$. Thus $b'$ spans at least $1/3$ of the width of $b$ and $1/3$ of the height, or at least $1/9$ of the area. Since at each depth in the search tree, every box is pruned by at least $1/9$, then the overall remaining space is also pruned by at least $1/9$.
\subsection{Proof of Theorem~\ref{thm:quantcorrect}}
\label{sec:proof:quantcorrect}
Fix a parameter $v \in \mathbb{R}^d$, and a positive vector $u \in \mathbb{R}_{>0}^d$. Let $t_{Q,z}^* \coloneq \denq{Q}_{v,u}(z)$ and $f_{Q,z}(t) \coloneq \den{Q(t \cdot u + v)}(z)$.
Conceptually, $f_{Q,z}$ considers the line segment in parameter space from $v$ ($t = 0$) to $v + u$ ($t = 1$), giving whether the point $t$ along that line segment is in the satisfaction region for $Q$ and $z$.

We say that $t$ is a boundary value of an antitone function $f : \bar{\mathbb{R}} \to \mathbb{B}$ if for any $t' > t$, $f(t') = 0$ and either $t = -\infty$ or $f(t) = 1$. Note that if $t$ is a boundary value of $f_{Q,z}$, then $t \cdot u + v$ is a boundary parameter for $Q$ and $z$. Thus it suffices to show that $t_{Q,z}^*$ is a boundary value of $f_{Q,z}$.

First, we show that if $t_1^*$ and $t_2^*$ are boundary values of $f_1$ and $f_2$, respectively, then $t^* = \min\{t_1^*, t_2^*\}$ is a boundary value of $f(t) = f_1(t) \land f_2(t)$---i.e. the boundary value of a conjunction is the minimum of the boundary values. Suppose $t^* = t_1^* \leq t_2^*$ (the other case is symmetric). Thus, for any $t' > t^*$, $t' > t_1^*$, so $f_1(t') = 0$, and so $f(t') = 0$. Also, either $t_Q^* = -\infty$, in which case we are done, or $f_1(t^*) = f_1(t_1^*) = 1$. In the latter case, we also have $f_2(t_2^*) = 1$, and because $f_2$ is antitone, $f_2(t_Q^*) = 1$. Thus $f(t^*) = f_1(t^*) \land f_2(t^*) = 1$. A similar argument shows that  $\max\{t_1^*, t_2^*\}$ is a boundary value of $f(t) = f_1(t) \lor f_2(t)$---i.e. the boundary value of a disjunction is the maximum of the boundary values.

Now we show by induction on $Q$ that $t_{Q,z}^*$ is a boundary value of $f_{Q,z}$ for any $z$:

For $Q = \varphi_{??i}$, by definition $t_{Q,z}^* = \frac{\iota_{\varphi}(z) - v_i}{u_i}$ and $f_{Q,z}(t) = \mathbbm{1}(\iota_{\varphi}(z) \ge t \cdot u_i + v_i)$. Thus $f_{Q,z}(t) = \mathbbm{1}(\frac{\iota_{\varphi}(z) -v_1}{u_i} \ge t)$, so $t_{Q,z}^*$ is the threshold value of $f_{Q,z}$.

For $Q = \varphi$, first suppose $\textsf{sat}_{\varphi}(z) = 1$, so by definition $t_{Q,z}^* = \infty$ and $f_{Q,z}(t) = 1$. Since there is no $t' > t_{Q,z}^*$, $t_{Q,z}^*$ is a boundary value of $f_{Q,z}$. Now suppose $\textsf{sat}_{\varphi}(z) = 0$, so by definition $t_{Q,z}^* = -\infty$ and $f_{Q,z}(t) = 0$. Since for any $t' > t_{Q,z}^*$, $f_{Q,z}(t') = 0$, $t_{Q,z}^*$ is a boundary value of $f_{Q,z}$.

For $Q = Q_1 \land Q_2$, $t_{Q,z}^* = \min\{t_{Q_1,z}^*, t_{Q_2,z}^*\}$ and $f_{Q,z}(t) = f_{Q_1,z}(t) \land f_{Q_2,z}(t)$. By the inductive hypothesis, $t_{Q_1,z}^*$ is a boundary value of $f_{Q_1,z}$ and $t_{Q_2,z}^*$ is a boundary value of $f_{Q_2,z}$, and so by our earlier result, the boundary value of $f_{Q,z}$ (a conjunction) is $t_{Q,z}^*$ (a minimum).

For $Q = Q_1 \seq Q_2$, $t_{Q,z}^* = \max_{0 \leq k \leq n}\min\{t_{Q_1,z_{0:k}}^*,t_{Q_2,z_{k:n}}^*\}$ and $f_{Q,z}(t) = \bigvee_{k=0}^n ~ f_{Q_1, z_{0:k}}(t) \land f_{Q_2,z_{k:n}}(t)$. As before, we apply the inductive hypothesis, and then we apply our earlier result about conjunctions and minima, and finally our result about disjunctions and maxima, to conclude that the boundary value of $f_{Q,z}$ is $t_{Q,z}^*$.